\def\sign{\mathrm{sign}}
\def\idf{\mathrm{id}}
\def\KL{\mathrm{KL}}
\def\tr{\mathrm{tr}}
\def\dnu{\mathrm{d}\nu}
\def\calX{\mathcal{X}}
\def\calC{\mathcal{C}}
\def\calD{\mathcal{D}}
\def\calF{\mathcal{F}}
\def\bbR{\mathbb{R}}
\newenvironment{proof}{\paragraph{Proof:}}{\hfill$\square$}
\def\Bhat{\mathrm{Bhat}}
\def\JB{\mathrm{JB}}
\newtheorem{definition}{Definition}
\newtheorem{theorem}{Theorem}
\newtheorem{corollary}{Corollary}
\newtheorem{lemma}{Lemma}
\newtheorem{example}{Example}
\newtheorem{remark}{Remark}
\title{Generalizing Jensen and Bregman divergences with comparative convexity and the statistical Bhattacharyya distances with comparable means\footnote{A Java\texttrademark{} source code for
 reproducible research is available at~\cite{CCCMcode}.
}
}
\date{}
\author{Frank Nielsen\footnote{Computer Science Department LIX, \'Ecole Polytechnique, 91128 Palaiseau Cedex, France and
  Sony Computer Science Laboratories Inc, Tokyo 141-0022, Japan. E-mail:{\tt Frank.Nielsen@acm.org}} \and Richard Nock\footnote{Data61, Sydney, Australia, the Australian National University  \& the University of Sydney, Australia.}}
\begin{document}
\maketitle

\begin{abstract}
Comparative convexity is a generalization of  convexity relying on  abstract notions of means.
We define  the (skew) Jensen divergence and the Jensen diversity from the viewpoint of comparative convexity,
and show how to obtain the generalized Bregman divergences as limit cases of skewed Jensen divergences. 
In particular, we report explicit formula of these generalized Bregman divergences when considering quasi-arithmetic means.
Finally, we introduce a generalization of the Bhattacharyya statistical distances based on comparative means using relative convexity.
\end{abstract}

\noindent Keywords: Jensen inequality, Jensen divergence, Bregman divergence, Bhattacharyya  divergence, Chernoff information, regular means, comparable means, comparative convexity, relative convexity, quasi-arithmetic means, diversity index, homogeneous divergence.

\section{Introduction: Convexity and comparative convexity}

\subsection{Convexity: Basic definitions}
To start with, let us recall some elementary definitions of convexity, and then introduce the broader notion of {\em comparative convexity}~\cite{ConvexFunction-2006,Hormander2007}.

A set $\calX\subset\bbR^d$ is {\em convex} if and only if (iff):
\begin{equation}
\forall p,q\in\calX, \quad \forall\lambda\in [0,1],\quad (1-\lambda) p+\lambda q\in\calX.
\end{equation}
Geometrically speaking, this means that the line segment $[pq]$ is fully contained inside $\calX$.
 
A  real-valued  continuous function $F$ on a convex domain $\calX$ is said {\em convex} iff:
\begin{equation}\label{eq:Jineq}
\forall p,q\in\calX,\quad \forall\lambda\in [0,1],\quad F((1-\lambda) p+\lambda q) \leq \lambda F(p)+(1-\lambda)F(q).
\end{equation}
A convex function is necessarily continuous~\cite{ConvexFunction-2006}.

It is {\em strictly convex} iff:
\begin{equation}\label{eq:Jsineq}
\forall p,q\in\calX,\quad p\not=q,\quad \forall\lambda\in (0,1),\quad F(\lambda p+(1-\lambda)q) < \lambda F(p)+(1-\lambda)F(q).
\end{equation}

Historically,  Jensen~\cite{Jensen-1905,Jensen-1906} defined in 1905 the notion of convexity using the
 {\em midpoint convexity} property (see (1) of~\cite{Jensen-1906}, page 176): 
\begin{equation}\label{eq:Jmidineq}
F(p)+F(q)\geq 2 F\left(\frac{p+q}{2}\right).
\end{equation}
A function satisfying this Jensen convexity inequality property may not be continuous~\cite{Maksa-2015}. 
But it turns out that for a {\em continuous function} $F$, the midpoint convexity implies the general convexity definition of Eq.~\ref{eq:Jineq}, see~\cite{ConvexFunction-2006}. A continuous and twice differentiable real-valued function $F$ is strictly convex iff $F''(x)>0$.
The well-known generalized criterion for multivariate convex functions consists in checking the positive-definiteness of the Hessian of the function, $\nabla^2 F\succ 0$.
This characterization is due to Alexandrov~\cite{Alexandrov-1939} in 1939.
Let $\calC$ denote the class of strictly convex real-valued functions.

When function $F$ is convex, its {\em epigraph} $\calF=\{(x,y)\ :\ x\in\calX, y\in\bbR,\ F(x)\leq y\}$ of $F$ is a {\em convex object} of $\calX\times \bbR$.
We can interpret geometrically the convexity of Eq.~\ref{eq:Jineq} by noticing that the {\em chord line segment} linking $(p,F(p))$ to $(q,F(q))$ is above the function plot $(x,F(x))$.  
Thus Inequality~\ref{eq:Jineq} can be written more generally as:
\begin{equation}\label{eq:Jineqg}
\forall p,q\in\calX,\quad \forall\lambda,\lambda'\in [0,1],\quad F((1-\lambda) p+\lambda q) \leq \lambda' F(p)+(1-\lambda')F(q).
\end{equation}

\subsection{Comparative convexity}
The notion of convexity can be generalized by observing that in Eq.~\ref{eq:Jmidineq}, rewritten as $\frac{F(p)+F(q)}{2}\geq F\left(\frac{p+q}{2}\right)$, {\em two arithmetic means} are used: One in the {\em domain} of the function ({\it ie.}, $\frac{p+q}{2}$), and one in the {\em codomain} of the function ({\it ie.}, $\frac{F(p)+F(q)}{2}$).
The branch of {\em comparative convexity}~\cite{ConvexFunction-2006} studies classes of $(M,N)$-convex functions $F$ that satisfies the following generalized midpoint convexity inequality:
\begin{equation}\label{eq:JD}
F(M(p,q))\leq N(F(p),F(q)),\quad \forall p,q\in\calX,
\end{equation}
where $M$ and $N$ are two abstract mean functions defined on the domain $\calX$ and codomain $\bbR$, respectively. 
That is, the field of convexity can be defined informally as the study of function behaviors under the actions of means.
This generalization of convexity was first studied by Aumann~\cite{Aumann-1933} in 1933.
Let $\calC_{M,N}$ denote the class of strictly $(M,N)$-convex functions.

There are many kinds of means~\cite{Bullen-2013}.
For example, the well-known {\em Pythagorean means} for  $p,q\in\bbR_{++}=(0,\infty)$ are:
\begin{itemize}
\item the {\em arithmetic mean} (A): $A(p,q)=\frac{p+q}{2}$, 
\item the {\em geometric mean} (G): $G(p,q)=\sqrt{pq}$,  and 
\item the {\em harmonic mean} (H): $H(p,q)=\frac{2}{\frac{1}{p}+\frac{1}{q}}=\frac{2pq}{p+q}$.
\end{itemize}

Thus comparative convexity generalizes the notion of {\em ordinary} convexity that is obtained by choosing $M(x,y)=N(x,y)=A(x,y)$, the arithmetic mean.
Notice that it follows from the Arithmetic Mean-Geometric Mean (AM-GM) inequality:
\begin{equation}
\forall p,q\in (0,\infty), \lambda \in [0,1], p^{1-\lambda}q^\lambda \leq (1-\lambda)p + \lambda q,
\end{equation}
that $(A,G)$-convexity (commonly called, log-convexity) implies $(A,A)$-convexity, but not the converse.
Indeed,  by definition, $F\in \calC_{A,G}$ satisfies the inequality $F(\frac{p+q}{2})\leq \sqrt{F(p)F(q)}$,
 and the AM-GM inequality yields $\sqrt{F(p)F(q)}\leq \frac{F(p)+F(q)}{2}$.
Thus we have by transitivity $F(\frac{p+q}{2})\leq \frac{F(p)+F(q)}{2}$ 
That is,  $F$ is ordinary convex: $F\in\calC$.
Therefore the $(A,G)$-convex functions are a proper subset of the ordinary convex functions: $\calC_{A,G}\subset \calC$.
Similarly, using the Arithmetic-Geometric-Harmonic (AGH) inequalities $A(x,y)\geq G(x,y)\geq H(x,y)$ (with equality iff $x=y$), we have the following function class inclusion relationship: $\calC_{A,H}\subset \calC_{A,G} \subset \calC$.

\subsection{Abstract means and barycenters}

An {\em abstract mean} $M(p,q)$ aggregates two values to produce an intermediate quantity that satisfies the {\em innerness property}~\cite{Bullen-2013}: 
\begin{equation}
\min\{p,q\} \leq M(p,q) \leq \max\{p,q\}.
\end{equation}

To illustrate the richness of abstract bivariate means, let us describe two generic constructions of  mean families:

\begin{description}

\item[Quasi-arithmetic  means.]

The {\em quasi-arithmetic mean} is defined  for a continuous strictly increasing  function $f:I\subset\bbR \rightarrow J\subset\bbR$ as:
\begin{equation}
M_f(p,q) = f^{-1}\left( \frac{f(p)+f(q)}{2}  \right).
\end{equation}
These means are  also called  Kolmogorov-Nagumo-de Finetti means~\cite{Kolmogorov-1930,Nagumo-1930,deFinetti-1931}.
Without loss of generality, we assume strictly increasing functions instead of monotonic functions since $M_{-f}=M_f$.
Indeed, $M_{-f}(p,q)=(-f)^{-1}(-f(M_f(p,q)))$ and $(-f)^{-1} \circ (-f)=\idf$, the identity function.
By choosing $f(x)=x$, $f(x)=\log x$ or $f(x)=\frac{1}{x}$,  we obtain the Pythagorean arithmetic, geometric, and harmonic means, respectively.

Another family of quasi-arithmetic means are the {\em power means} also called {\em H\"older means}~\cite{Holder-1889}:
\begin{equation}
P_\delta(x,y)=\left(\frac{x^\delta+y^\delta}{2}\right)^{\frac{1}{\delta}},
\end{equation}
They are obtained   for $f(x)=x^\delta$  for $\delta\not =0$ with  $I=J=(0,\infty)$, and include in the limit cases the maximum and minimum values:
$\lim_{\delta\rightarrow\infty} P_\delta(a,b)=\max\{a,b\}$ and $\lim_{\delta\rightarrow -\infty} P_\delta(a,b)=\min\{a,b\}$.
The harmonic mean is obtained for $\delta=-1$: $H=P_{-\delta}$ and the quadratic mean $Q(p,q)=\sqrt{\frac{p^2+q^2}{2}}=P_2$ for $\delta=2$.
To get a smooth family of H\"older means, we define $P_0(x,y)=\sqrt{xy}$, the geometric mean, for $\delta=0$.
The power means are provably the only {\em homogeneous} quasi-arithmetic means: $M_\delta(\lambda a,\lambda b)=\lambda M_\delta(a,b)$ for any $\lambda\geq 0$, see Proposition 3 of~\cite{Pasteczka-2015}.
We refer the Reader to Appendix~\ref{sec:aqam} for an axiomatization of these quasi-arithmetic means due to Kolmogorov~\cite{Kolmogorov-1930} in 1930, and an extension to define {\em quasi-arithmetic expected values} of a random variable.

\item[Lagrange means.]
Lagrange means~\cite{LagrangianMeans-1998} (also termed Lagrangean means) are mean values derived from the mean value theorem.
Assume without loss of generality that $p < q$ so that the mean $m\in [p,q]$.
From the mean value theorem, we have for a differentiable function $f$: 
\begin{equation}
\exists \lambda \in [p,q] :  f'(\lambda) = \frac{f(q)-f(p)}{q-p}.
\end{equation}
Thus when $f'$ is a monotonic function, its inverse function  $f^{-1}$ is well-defined, and the unique mean value mean $\lambda\in [p,q]$ can be defined as:
\begin{equation}
L_f(p,q)= \lambda = (f')^{-1}\left( \frac{f(q)-f(p)}{q-p} \right).
\end{equation}

For example, letting $f(x)=\log(x)$ and $f'(x)=(f')^{-1}(x)=\frac{1}{x}$, we recover the {\em logarithmic mean} (L), that is {\em not} a quasi-arithmetic mean:
$$
\begin{array}{ll}
m(p,q)
&=
\begin{cases}
0 & \text{if } p=0 \text{ or } q=0 ,\\
x & \text{if } p=q ,\\
\frac{q - p}{\log q - \log p} & \text{otherwise,}
\end{cases}
\end{array}
$$

The logarithmic mean is bounded below by the geometric mean and above by the arithmetic mean: 
$G(p,q)\leq L(p,q)\leq A(p,q)$. 

\end{description}

Both quasi-arithmetic and Lagrange mean generators are defined up to an affine term $ax+b$ with $a\not =0$.
Moreover, the intersection of the class of quasi-arithmetic means with the Lagrangean means has been fully characterized in~\cite{Pales-2011}, and include the arithmetic mean $A$.

In general, a mean is {\em strict} when $M(p,q)\in (p,q)$ for $p\not =q$, and {\em symmetric} when $M(p,q)=M(q,p)$.

Yet another interesting  family of means are the {\em Stolarsky means} (S).
The  Stolarsky means  are not quasi-arithmetic means nor mean-value means, and are defined as follows
\begin{equation}
S_p(x,y)= \left(\frac{x^p-y^p}{p(x-y)}\right)^{\frac{1}{p-1}},\quad p\not\in \{0,1\}.
\end{equation}

In limit cases, the  Stolarsky  family of means yields the logarithmic mean (L) when $p\rightarrow 0$ and the {\em identric mean} (I) when $p\rightarrow 1$:
\begin{equation}
I(x,y) =  \left( \frac{y^y}{x^x} \right)^{\frac{1}{y-x}}.
\end{equation}

The Stolarsky means belong to the family of {\em Cauchy mean-value means}~\cite{Bullen-2013} defined for two positive differentiable and strictly  monotonic functions $f$ and $g$ such that $\frac{f'}{g'}$ has an inverse function.
The Cauchy mean-value mean is defined by:
\begin{equation}
C_{f,g}(p,q) = \left(\frac{f'}{g'}\right)^{-1} \left( \frac{f(q)-f(p)}{g(q)-g(p)} \right), \quad q\not= p,
\end{equation}
with $C_{f,g}(p,p) = p$.

The Cauchy means can be reinterpreted as Lagrange means~\cite{weightedCauchy-2006} by the following identity: 
$C_{f,g}(p,q) = L_{f\circ g^{-1}}(g(p),g(q))$ since $((f\circ g^{-1})(x))'=\frac{f'(g^{-1}(x))}{g'(g^{-1}(x))}$:

\begin{eqnarray}
L_{f\circ g^{-1}}(g(p),g(q)) &=& ((f\circ g^{-1})'(g(x)))^{-1}\left( \frac{(f\circ g^{-1})(g(q))- (f\circ g^{-1})(g(p))}{g(q)-g(p)}\right),\\
&=& \left(\frac{f'(g^{-1}(g(x)))}{g'(g^{-1}(g(x)))}\right)^{-1}\left( \frac{f(q)-f(p)}{g(q)-g(p)} \right),\\
&=& C_{f,g}(p,q).
\end{eqnarray}

More generally, we may weight the values and consider {\em barycentric means} $M(p,q;1-\alpha,\alpha)=M_\alpha(p,q)$ for $\alpha\in [0,1]$.
Those {\em weighted means} further satisfy the following smooth interpolation property:
\begin{equation}
M_0(p,q)=p, \quad M_1(p,q)=q, \quad M_{1-\alpha}(p,q)=M_{\alpha}(q,p).
\end{equation}

For example, a {\em quasi-arithmetic barycentric mean} is defined for a monotone function $f$ by:
\begin{equation}
M_f(p,q;1-\alpha,\alpha)= M_{f,\alpha}(p,q) = f^{-1}\left( (1-\alpha)f(p)+ \alpha f(q)   \right).
\end{equation}

\begin{definition}[Regular mean]
A mean is said {\em regular} if it is:
\begin{enumerate}
\item homogeneous,
\item symmetric, 
\item continuous, and 
\item increasing in each variable.
\end{enumerate}
\end{definition}

In this paper, we shall consider regular means and weighted means.
The Pythagorean means are regular means that can be extended to {\em Pythagorean barycenters} (weighted means) for  $p,q\in\bbR_{++}$ as follows:
\begin{itemize}
\item the {\em arithmetic barycenter} (A): $A(p,q;1-\alpha,\alpha)=(1-\alpha)p+\alpha q$, 
\item the {\em geometric barycenter} (G): $G(p,q;1-\alpha,\alpha)= p^{1-\alpha}q^\alpha$,  and 
\item the {\em harmonic barycenter} (H): $H(p,q;1-\alpha,\alpha)= \frac{1}{(1-\alpha)\frac{1}{p}+\alpha\frac{1}{q}}=\frac{pq}{\alpha p+(1-\alpha)q}$.
\end{itemize}

The power barycenters (P) are defined by $P_\delta(p,q;1-\alpha,\alpha)=\left((1-\alpha)p^\delta+\alpha q^\delta\right)^{\frac{1}{\delta}}$ for $\delta\not =0$.
Those power barycenters  generalize the arithmetic and harmonic barycenters, and can be extended into a smooth family of barycenters 
by setting $P_0(p,q;1-\alpha,\alpha)=G(p,q;1-\alpha,\alpha)$. 

Let us give two families of  means that are not quasi-arithmetic means:
The weighted {\em Lehmer mean}~\cite{Beliakov-2015} of order $\delta$ is defined for $\delta\in\bbR$ as:
\begin{equation}
L_\delta(x_1,\ldots,x_n;w_1,\ldots,w_n) = \frac{\sum_{i=1}^n w_i x_i^{\delta+1}}{\sum_{i=1}^n w_i x_i^\delta}.
\end{equation}

Notice that we have $L_{-\frac{1}{2}}=G$ (the geometric mean) since the denominator of $L_{-\frac{1}{2}}(p,q)$ rewrites as
 $p^{-\frac{1}{2}}+q^{-\frac{1}{2}}=\frac{\sqrt{pq}}{\sqrt{p}+\sqrt{q}}$.
The Lehmer means intersect with the H\"older means only for the arithmetic, geometric and harmonic means.
However the Lehmer mean $L_2$ is not a regular mean since it is {\em not} increasing in each variable.
Indeed, $L_1(x,y)=\frac{x^2+y^2}{x+y}=C(x,y)$ is the contraharmonic mean.

The family of  Lehmer barycentric means can further be encapsulated into the family of {\em Gini means}:
\begin{equation}
G_{\delta_1,\delta_2}(x_1,\ldots,x_n;w_1,\ldots,w_n) =
\left\{
\begin{array}{ll}
\left( \frac{\sum_{i=1}^n w_i x_i^{\delta_1}}{\sum_{i=1}^n w_i x_i^{\delta_2}}\right)^{\frac{1}{\delta_1-\delta_2}} & \delta_1\not=\delta_2,\\
\left(\prod_{i=1}^n x_i^{w_i x_i^\delta} \right)^{\frac{1}{\sum_{i=1}^n w_i x_i^\delta} } & \delta_1=\delta_2=\delta.
\end{array}
\right.
\end{equation}

Those families of Gini and Lehmer means are homogeneous means:
$G_{\delta_1,\delta_2}(\lambda x_1,\ldots,\lambda x_n;w_1,\ldots,w_n)= \lambda G_{\delta_1,\delta_2}(x_1,\ldots,x_n;w_1,\ldots,w_n)$ for any $\lambda>0$.
The family of Gini means include the power means: $G_{0,\delta}=P_\delta$ for $\delta\leq 0$ and $G_{\delta,0}=P_\delta$ for $\delta\geq 0$.

The Bajraktarevic means~\cite{Beliakov-2014} are also not regular.

Given a symmetric and homogeneous mean $M(x,y)$, we can associate a dual mean $M^*(x,y)=\frac{1}{M(\frac{1}{x},\frac{1}{y})}=\frac{xy}{M(x,y)}$ that is symmetric, homogeneous, and satisfies $(M^*)^*=M$.  We write concisely $M^*=\frac{G^2}{M}$ (the geometric mean $G$ is self-dual), and we have $\min^*=\max$ and $\max^*=\min$.

\subsection{Paper outline}
The goal of this paper is to generalize the ordinary Jensen, Bregman and Bhattachayya distances~\cite{BR-2011} using an extended notion of convexity.
In particular, the classes of generalized convex functions $\calC_{M,N}$ generalize the ordinary convex functions (the standard $(A,A)$-convexity), and include the following classes:

\begin{itemize}

	\item the class of {\em log-convex functions} ($M$ the arithmetic mean and $N$ the geometric mean),
	
	\item the class of {\em multiplicatively convex functions} ($M$ and $N$ both geometric
means),

\item the class of {\em $M_p$-convex functions} ($M$ the arithmetic mean and $N$ the $p$-th power mean).
	
\end{itemize}

The paper is organized as follows:
Section~\ref{sec:GenCCD} defines the generalized Jensen and skew Jensen divergences from generalized convexity inequalities, extends the definitions to Jensen diversity, and
 introduces the generalized Bregman divergences as a limit case of skew Jensen divergences.
Section~\ref{sec:QAB} considers the class of quasi-arithmetic means to report explicit formulas for these generalized Bregman divergences.
In Section~\ref{sec:GB}, we introduce a generalization of the statistical Bhattacharyya  divergence and  of the Bhattacharyya coefficient using the concept of comparable means, and show how to obtain closed-form expressions by adapting the means to the structure of the input distributions.
Finally, Section~\ref{sec:concl} concludes and hints at further perspectives.
For sake of completeness, the axiomatization of quasi-arithmetic means are reported in Appendix~\ref{sec:aqam}.

\section{Generalized Jensen, skewed Jensen and Bregman divergences}\label{sec:GenCCD}

The Jensen midpoint inequality of Eq.~\ref{eq:Jmidineq} can be used to build a symmetric dissimilarity measure $J_F(p,q)=J_F(q,p)$ for $F\in\calC$, originally called the Jensen difference   in~\cite{JensenDiversity-1982}:

\begin{equation}
J_F(p,q) = \frac{F(p)+F(q)}{2} - F\left(\frac{p+q}{2}\right) \geq 0.
\end{equation}

Nowadays, that distance is called a {\em Jensen divergence} (or a Burbea-Rao divergence~\cite{BR-2011}).
The term ``divergence'' is traditionally used in information geometry~\cite{IG-2016} instead  of distance to emphasize the fact that the dissimilarity may not be a metric.
A divergence $D(p,q)$ only requires:
\begin{enumerate} 
\item to satisfy the law of the indiscernible

\begin{equation}
D(p,q) = 0 \Leftrightarrow p=q,
\end{equation}
and
\item to be thrice differentiable in order define a differential-geometric structure involving a metric tensor and a cubic tensor~\cite{IG-2016}.
\end{enumerate}

It follows by construction from the Jensen inequality that $J_F(p,q) \geq 0$, and that $J_F(p,q)=0$ iff $p=q$ for a {\em strictly convex} function $F$.

\subsection{Jensen Comparative Convexity Divergences}

Let us extend the definitions of Jensen, skewed Jensen   divergences to the setting of comparative convexity as follows:

\begin{definition}[Jensen Comparative Convexity Divergence, JCCD]
The Jensen Comparative Convexity Divergence (JCCD) is defined for a strictly $(M,N)$-convex function $F\in\calC_{M,N}:I\rightarrow \bbR$  by:

\begin{equation}\label{eq:JCCD}
\boxed{J_F^{M,N}(p,q) = N(F(p),F(q)))-F(M(p,q))}
\end{equation}
\end{definition}

For symmetric means $M$ and $N$, the JCCD is a symmetric divergence: $J_F^{M,N}(p,q)=J_F^{M,N}(q,p)$.
It follows from the strict $(M,N)$-convexity property of $F$ that $J_F^{M,N}(p,q)=0$ iff $p=q$.

The definition of the JCCD can be extended to skew JCCDs by taking the barycentric {\em regular} means:

\begin{definition}[Skew Jensen Comparative Convexity Divergence]\label{def:jccd}
The skew $\alpha$-Jensen Comparative Convexity Divergence (sJCCD) is defined for a strictly
 $(M,N)$-convex function $F\in\calC_{M,N}:I\rightarrow \bbR$ where $M$ and $N$ are regular means and $\alpha\in (0,1)$  by:

\begin{equation}\label{eq:sJCCD}
\boxed{J_{F,\alpha}^{M,N}(p:q) = N_\alpha(F(p),F(q))-F(M_\alpha(p,q)).}
\end{equation}
\end{definition}

It follows that $J_{F,1-\alpha}^{M,N}(p:q)=J_{F,\alpha}^{M,N}(q,p)$.
The fact that $J_{F,\alpha}^{M,N}(p:q)\geq 0$ follows from the midpoint $(M,N)$-convexity property of function $F$
 (see Theorem~A  page 4 and Section 2.6 page 88 of~\cite{ConvexFunction-2006}).
In fact, the generalized midpoint convexity inequality plus the continuity assumption yields an exact characterization of $(M,N)$-convex functions, see~\cite{ConvexFunction-2006}.
The power means (including the harmonic mean, the arithmetic mean and the geometric mean by extension)  are examples of regular means.
Note that the exponential function $\exp(x)$ is both $(L,L)$-convex and $(I,I)$-convex, two logarithmic and identric regular Stolarsky's means.

In some cases, when the barycentric means is well-defined for $\alpha\in\bbR$ ({\it ie.}, extrapolating values when $\alpha<0$ or $\alpha>1$), we can extend the skew Jensen divergences to $\alpha\in \bbR\backslash\{0,1\}$. 
For example, using the arithmetic means $M=N=A$, we may define for $\alpha\in \bbR\backslash\{0,1\}$:
\begin{equation}\label{eq:sJCCDA}
J_{F,\alpha}(p:q) =  \sign(\alpha(1-\alpha)) \left( A_\alpha(F(p),F(q))-F(A_\alpha(p,q))\right),
\end{equation}
where $A_\alpha(p,q)=(1-\alpha)p+\alpha q$.

\begin{example}[Jensen divergence for multiplicatively convex functions]
The class $\calC_{G,G}$ of strictly $(G,G)$-convex functions is called the class of {\em multiplicatively convex functions}.
Let $F\in\calC_{G,G}$. We get the $(G,G)$-Jensen divergence:
\begin{equation}
J^{G,G}_{\exp}(p,q) = \sqrt{F(p)F(q)}-F(\sqrt{pq}) \geq 0.
\end{equation}

We check that $J^{G,G}(x,x)=0$.
It turns out that $F\in\calC_{G,G}$ when $\log F (x)$
is a convex function of $\log x$ (see Lemma~\ref{lemma:cvx}).
Some examples of multiplicatively convex functions are $F(x)=\exp(x)$, $F(x)=\sinh(x)$, $F(x)=\Gamma(x)$ (the Gamma function generalizing the factorial), $F(x)=\exp(\log^2 x)$.
For example, take $F(x)=\exp(x)$.
Then the corresponding $(G,G)$-JCCD is:
\begin{equation}
J^{G,G}(p:q)  = \exp\left(\frac{p+q}{2}\right) - \exp(\sqrt{pq})\geq 0.
\end{equation}
\end{example}

\subsection{Jensen  Comparative Convexity Diversity Indices}
The $2$-point divergences ({\it ie.}, dissimilarity measure between two points) can be extended to a positively weighted set of values by defining a notion of {\em diversity}~\cite{JensenDiversity-1982}  as:

\begin{definition}[Jensen  Comparative Convexity Diversity Index, JCCDI]
Let $\{(w_i,x_i)\}_{i=1}^n$ be a set of $n$ positive weighted values so that $\sum w_i=1$.
Then the  {\em Jensen diversity index} with respect to the strict $(M,N)$-convexity of a function $F$ is:
\begin{equation}\label{eq:JDI}
\boxed{J_F^{M,N}(x_1,\ldots, x_n; w_1, \ldots, w_n)   = N(F(x_1),\ldots, F(x_n); w_1, \ldots, w_n) - F (M(x_1,\ldots, x_n; w_1, \ldots, w_n))}
\end{equation}
\end{definition}

It is proved in~\cite{Niculescu-2003} that $N(F(x_1),\ldots, F(x_n); w_1, \ldots, w_n) \geq F (M(x_1,\ldots, x_n; w_1, \ldots, w_n))$ for a continuous $(M,N)$-convex function. Therefore, we have $J_F^{M,N}(x_1,\ldots, x_n; w_1, \ldots, w_n)  \geq 0$.
See also  Theorem~A  page 4  of~\cite{ConvexFunction-2006}.

When both means $M$ and $N$ are set to the arithmetic mean, this diversity index has been called the {\em Bregman information}~\cite{BD-2005} in the context of clustering with Bregman divergences. The Bregman information generalizes the notion of variance of a cluster obtained for the generator $F(x)=x^\top x$.

\subsection{Bregman Comparative Convexity Divergences}

Let us define the {\em Bregman Comparative Convexity Divergence} (BCCD), also called generalized $(M,N)$-Bregman divergence, as the limit case of skew JCCDs:

\begin{definition}[Bregman Comparative Convexity Divergence, BCCD]\label{def:bccd}
The Bregman Comparative Convexity Divergence (BCCD) is defined for a strictly $(M,N)$-convex function $F:I\rightarrow \bbR$   by

\begin{equation}\label{eq:BCCD}
\boxed{
B_{F}^{M,N}(p:q) = \lim_{\alpha\rightarrow 1^-}   \frac{1}{\alpha(1-\alpha)}J_{F,\alpha}^{M,N}(p:q) = \lim_{\alpha\rightarrow 1^-}  \frac{1}{\alpha(1-\alpha)} \left( N_\alpha(F(p),F(q)))-F(M_\alpha(p,q)) \right)
}
\end{equation}
\end{definition}

It follows from the symmetry $J_{F,\alpha}(p:q)=J_{F,1-\alpha}(q:p)$ (for symmetric means) that when the limits exists, we get the {\em reverse Bregman divergence}:
\begin{equation}
B_{F}^{M,N}(q:p)= \lim_{\alpha\rightarrow 0^+}   \frac{1}{\alpha(1-\alpha)}J_{F,\alpha}^{M,N}(p:q)  = \lim_{\alpha\rightarrow 0^+} \frac{1}{\alpha(1-\alpha)} \left( N_\alpha(F(p),F(q)))-F(M_\alpha(p,q)) \right).
\end{equation}

Note that the limits are one-sided limits.

Notice that when both means $M$ and $N$ are chosen as the arithmetic mean, we recover the ordinary Jensen, skew Jensen and Bregman divergences described and studied in~\cite{BR-2011}.
This generalization of Bregman divergences has also been studied by Petz~\cite{Petz-2007} to get generalized quantum relative entropies.
Petz defined the Bregman divergence between two points $p$ and $q$ of a convex set $C$ in a Banach space for a
 given function $F:C\rightarrow \mathcal{B}(\mathcal{H})$ (Banach space induced by a Hilbert space $\mathcal{H}$) as:
\begin{equation}
B_F(p:q)=F(p)-F(q)-\lim_{\alpha\rightarrow 0^+} \frac{1}{\alpha}(F(q+\alpha(p-q)) -F(q)).
\end{equation}
Indeed, this last equation can be rewritten as:
\begin{eqnarray}
B_F(p:q) &=& \lim_{\alpha\rightarrow 0^+} \frac{1}{\alpha} (\alpha F(p)-(1-\alpha)F(q) - (F(q+\alpha(p-q))),\\
&=&  \lim_{\alpha\rightarrow 1^-} \frac{1}{1-\alpha} (A_{\alpha}(F(p),F(q)) -F(A_{\alpha}(p,q))),\\
&=&   \lim_{\alpha\rightarrow 1^-} \frac{1}{1-\alpha} J_{F,\alpha}^{A,A}(p,q).
\end{eqnarray}
When $C$ is the set of positive semi-definite matrices of unit trace and $F(x)=x\log x$, then the induced Bregman divergence
is Umegaki's relative entropy~\cite{Petz-2007}: $B_F(p:q)=\tr p(\log p -\log q)$.

Thus we have a general recipe to get generalized Bregman divergences:
Study the asymptotic barycentric symmetric mean expansions of $M(p,q;1-\alpha,\alpha)$ and $N(F(p),F(q);1-\alpha,\alpha)$ when $\alpha\rightarrow 0$, and deduce the generalized $(M,N)$-Bregman divergence provided the limits of $\frac{1}{\alpha}M(p,q;1-\alpha,\alpha)$ and $\frac{1}{\alpha}N(F(p),F(q);1-\alpha,\alpha)$ 
exist when $\alpha\rightarrow 0$.

Letting $\omega=2\alpha-1\in(-1,1)$ (or $\alpha=\frac{1+\omega}{2}\in(0,1)$) and using barycentric means, we can define the following divergence:

\begin{equation}
\boxed{D_{F,\omega}^{M,N}(p:q) =  \frac{1}{1-\omega^2}\left( N\left(F(p),F(q);\frac{1-\omega}{2};\frac{1+\omega}{2}\right) - 
F\left( M\left(p,q;\frac{1-\omega}{2};\frac{1+\omega}{2}\right)\right) \right)}
\end{equation}

Then the generalized Bregman divergences are obtained in the limit cases when $\omega\rightarrow\pm 1$.
Notice that in~\cite{Zhang-2004} (Sec. 3.5), Zhang defined a divergence functional from generalized (quasi-arithmetic) means:
For $f$ a strictly convex and strictly monotone increasing function, Zhang defined the divergence for $\rho=f^{-1}$ as follows:

\begin{equation}
\calD^{(\alpha)}_\rho(p,q)=\frac{4}{1-\alpha^2}\int_\calX \left(
\frac{1-\alpha}{2}p + \frac{1+\alpha}{2}q - M_\rho(p,q) 
\right) \dnu(x).
\end{equation}

We shall investigate such generalized $(M,N)$-Bregman divergences when both means are weighted quasi-arithmetic means in Section~\ref{sec:QAB}.

\subsection{From univariate to multivariate separable divergences}
Multivariate divergences can be built from univariate divergences component-wise.
For example, let $P=(P_1,\ldots, P_d)$ and $Q=(Q_1,\ldots, Q_d)$ be two vectors of $\bbR^d$, and consider the following multivariate generalized Bregman divergence:
\begin{equation}
B_F(P:Q) = \sum_{i=1^d} B_{F_i}^{M_i,N_i}(P_i:Q_i),
\end{equation}
where $F_i\in\calC_{M_i,N_i}$ is a $(M_i,N_i)$-convex function.
These divergences  can be decomposed as a sum of univariate divergences, and are thus  called {\em separable divergences} in the literature~\cite{LearningDivergence-2015}.

\begin{remark}
Observe that the BCCD can be approximated in practice from the JCCD by taking small values for $\alpha>0$:
For example, the ordinary Bregman divergence can be approximated from the ordinary skew Jensen divergence as follows:

\begin{equation}
B_F(q:p) \simeq \frac{1}{\alpha(1-\alpha)} \left ((1-\alpha)F(p)+\alpha F(q) -F((1-\alpha) p + \alpha q)\right),\quad \mbox{$\alpha>0$ small}.
\end{equation}
This is all the more interesting in practice for approximating the Bregman divergence by skipping the calculation of the gradient $\nabla F$.
\end{remark}

We shall now report explicit formulas for the generalized Bregman divergences when using quasi-arithmetic means.

\section{Quasi-arithmetic Bregman divergences}\label{sec:QAB}

Let us report direct formulas for the generalized Bregman divergences defined with respect to quasi-arithmetic comparative convexity.
Let $\rho$ and $\tau$ be two continuous differentiable functions defining the quasi-arithmetic means $M_\rho$ and $M_\tau$, respectively.

\subsection{A direct formula}
By definition, a function $F\in\calC_{\rho,\tau}$ is $(M_\rho,M_\tau)$-convex iff:
\begin{equation} 
M_\tau(F(p),F(q))) \geq F(M_\rho(p,q)).
\end{equation}
This  $(M_\rho,M_\tau)$-midpoint convexity property with the continuity of $F$ yields the more general definition of $(M_\rho,M_\tau)$-convexity:
\begin{equation} 
M_{\tau,\alpha}(F(p),F(q))) \geq F(M_{\rho,\alpha}(p,q)),\quad  \alpha\in [0,1].
\end{equation}

Let us study those quasi-arithmetic Bregman divergences $B^{\rho,\tau}_F$ obtained when taking the limit:
\begin{equation} 
B^{\rho,\tau}_F(q:p) = \lim_{\alpha\rightarrow 0} \frac{1}{\alpha(1-\alpha)} \left( M_{\tau,\alpha}(F(p),F(q)))- F\left(M_{\rho,\alpha}(p,q)\right) \right),
\end{equation}
for $M_{\rho,\alpha}$ and $M_{\tau,\alpha}$ two quasi-arithmetic barycentric means obtained for continuous and monotonic functions $\rho$ and $\tau$, respectively.
Recall that a quasi-arithmetic barycentric mean for a monotone function $\tau$ is defined by:
\begin{equation}
M_{\alpha}(p;q)= \tau^{-1}\left(\tau(p)+\alpha(\tau(q)-\tau(p))\right), \quad \alpha\in [0,1],\quad M_{0}(p;q)=p, M_{1}(p;q)=q.
\end{equation}

We state the generalized Bregman divergence formula obtained with respect to quasi-arithmetic comparative convexity:

\begin{theorem}[Quasi-arithmetic Bregman divergences, QABD\label{theo:qabd}]
Let $F:I\subset \bbR \rightarrow \bbR$ be a real-valued $(M_\rho,M_\tau)$-convex function defined on an interval $I$ for two strictly monotone and differentiable functions $\rho$ and $\tau$.
The quasi-arithmetic Bregman divergence (QABD) induced by the comparative convexity is:
\begin{equation}
\boxed{
B^{\rho,\tau}_F(p:q) 
=  
\frac{\tau(F(p))-\tau(F(q))}{\tau'(F(q))} - \frac{\rho(p)-\rho(q)}{\rho'(q)} F'(q).
}
\end{equation}
\end{theorem}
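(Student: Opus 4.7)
The plan is to compute the one-sided limit in Definition~\ref{def:bccd} (taking $\alpha \to 1^-$ so that the base point is $q$) by Taylor-expanding both quasi-arithmetic means to first order around $q$ and $F(q)$ respectively. To streamline the bookkeeping I would introduce $\beta = 1-\alpha$, so that $\beta \to 0^+$ and both arguments $(1-\alpha)\rho(p)+\alpha\rho(q) = \rho(q)+\beta(\rho(p)-\rho(q))$ and the analogous expression with $\tau$ lie in a small neighbourhood of $\rho(q)$ and $\tau(F(q))$ respectively.

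The two key expansions are obtained from the standard identity $(\rho^{-1})'(\rho(q)) = 1/\rho'(q)$, which is valid since $\rho$ is strictly monotone and differentiable with $\rho'(q)\neq 0$. Concretely:
\begin{eqnarray*}
M_{\rho,\alpha}(p,q) &=& \rho^{-1}\!\left(\rho(q)+\beta(\rho(p)-\rho(q))\right) \;=\; q + \beta\,\frac{\rho(p)-\rho(q)}{\rho'(q)} + O(\beta^2),\\
M_{\tau,\alpha}(F(p),F(q)) &=& F(q) + \beta\,\frac{\tau(F(p))-\tau(F(q))}{\tau'(F(q))} + O(\beta^2).
\end{eqnarray*}
Composing the first with $F$ (differentiable at $q$ by hypothesis) yields
\[
F(M_{\rho,\alpha}(p,q)) = F(q) + \beta\,F'(q)\,\frac{\rho(p)-\rho(q)}{\rho'(q)} + O(\beta^2).
\]

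Subtracting, the $F(q)$ terms cancel, so
\[
M_{\tau,\alpha}(F(p),F(q)) - F(M_{\rho,\alpha}(p,q))
= \beta\left[\frac{\tau(F(p))-\tau(F(q))}{\tau'(F(q))} - F'(q)\,\frac{\rho(p)-\rho(q)}{\rho'(q)}\right] + O(\beta^2).
\]
Since $\alpha(1-\alpha) = \beta(1-\beta)$, dividing by $\alpha(1-\alpha)$ gives an expression whose prefactor $\tfrac{1}{1-\beta}$ tends to $1$ and whose remainder is $O(\beta)$. Passing to the limit $\beta \to 0^+$ yields the boxed formula.

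The computation itself is routine; the only point requiring care is justifying the first-order expansion of $\rho^{-1}$ and $\tau^{-1}$, which rests on strict monotonicity and nonvanishing of $\rho'(q)$ and $\tau'(F(q))$, and the differentiability of $F$ at $q$ (guaranteed by the smoothness hypothesis in the theorem). I would note in passing that swapping $p$ and $q$ in this derivation produces the reverse divergence $B_F^{\rho,\tau}(q:p)$ obtained as the $\alpha\to 0^+$ limit, consistent with the one-sided symmetry remarked after Definition~\ref{def:bccd}, and that specialising $\rho=\tau=\idf$ recovers the ordinary Bregman divergence $F(p)-F(q)-(p-q)F'(q)$ of~\cite{BR-2011}.
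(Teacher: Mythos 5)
Your proposal is correct and follows essentially the same route as the paper: a first-order Taylor expansion of $\rho^{-1}$ and $\tau^{-1}$ via $(\gamma^{-1})'(\gamma(x))=1/\gamma'(x)$, composition with $F$, cancellation of the zeroth-order terms, and division by $\alpha(1-\alpha)$. The only cosmetic difference is that you expand directly at $\alpha\to 1^-$ around the base point $q$, whereas the paper expands at $\alpha\to 0$ around $p$ (obtaining $B_F^{\rho,\tau}(q:p)$) and then swaps $p\leftrightarrow q$; also, your $O(\beta^2)$ remainders could be stated as $o(\beta)$ so that only first-order differentiability is needed, matching the theorem's hypotheses.
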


\begin{proof} 

By taking the first-order Taylor expansion of $\tau^{-1}(x)$ at $x_0$, we get:
\begin{equation}
\tau^{-1}(x)\simeq_{x_0} \tau^{-1}(x_0) + (x-x_0) (\tau^{-1})'(x_0).
\end{equation}
Using the property of the derivative of an inverse function:
\begin{equation}
(\tau^{-1})'(x)=\frac{1}{(\tau'(\tau^{-1})(x))},
\end{equation} 
it follows that the first-order Taylor expansion of $\tau^{-1}(x)$ is:
\begin{equation}
\tau^{-1}(x)\simeq \tau^{-1}(x_0)+ (x-x_0) \frac{1}{(\tau'(\tau^{-1})(x_0))}.
\end{equation}

Plugging $x_0=\tau(p)$ and $x=\tau(p)+\alpha(\tau(q)-\tau(p))$, we get a first-order approximation of the barycentric quasi-arithmetic mean $M_\tau$ when $\alpha\rightarrow 0$:

\begin{equation}
M_\alpha(p,q)  \simeq  p +  \frac{\alpha(\tau(q)-\tau(p))}{\tau'(p)}.
\end{equation}

For example, when $\tau(x)=x$ ({\it ie.}, arithmetic mean), we have $A_\alpha(p,q)  \simeq  p+\alpha (q-p)$,
when $\tau(x)=\log x$ ({\it ie.}, geometric mean), we obtain $G_\alpha(p,q)  \simeq  p+\alpha p \log\frac{q}{p}$, and when $\tau(x)=\frac{1}{x}$ ({\it ie.}, harmonic mean) we get $H_\alpha(p,q)  \simeq  p+\alpha(p-\frac{p^2}{q})$.
For the regular power means, we have $P_\alpha(p,q)  \simeq  p+\alpha \frac{q^\delta-p^\delta}{\delta p^{\delta-1}}$.
These are first-order weighted mean approximations obtained by small values of $\alpha$.
Suppose $p < q$. Since $\tau$ is a monotone function: When $\tau$ is strictly increasing, $\tau'>0$ and $\tau(q)>\tau(p)$.
Therefore  $\frac{\tau(q)-\tau(p)}{\tau'(p)}>0$. Similarly, when $\tau$ is strictly decreasing, $\tau'<0$ and $\tau(q)<\tau(p)$ so that $\frac{\tau(q)-\tau(p)}{\tau'(p)}>0$.

Now, consider the skewed Jensen Comparative Convexity Distance defined by:
\begin{equation}
J^{\tau,\rho}_{F,\alpha}(p:q)=   (M_{\tau,\alpha}(F(p),F(q)) - F(M_{\rho,\alpha}(p,q) ) ),
\end{equation}
and apply a first-order Taylor expansion to get:
\begin{equation}
F(M_{\tau,\alpha}(p,q) ))\simeq  F\left (p +  \frac{\alpha(\tau(q)-\tau(p))}{\tau'(p)} \right)
\simeq F(p)+ \frac{\alpha(\tau(q)-\tau(p))}{\tau'(p)} F'(p)
\end{equation}

Thus it follows that the Bregman divergence for quasi-arithmetic comparative convexity is:

\begin{equation}
B^{\rho,\tau}_F(q:p) = \lim_{\alpha\rightarrow 0} \frac{1}{\alpha(1-\alpha)}J_{\tau,\rho,\alpha}(p:q) =  
\frac{\tau(F(q))-\tau(F(p))}{\tau'(F(p))} - \frac{\rho(q)-\rho(p)}{\rho'(p)} F'(p),
\end{equation}

and the reverse Bregman divergence is:

\begin{equation}
B^{\rho,\tau}_F(p:q) = \lim_{\alpha\rightarrow 1} \frac{1}{\alpha(1-\alpha)} J^{\tau,\rho}_\alpha(p:q) =  \lim_{\alpha\rightarrow 0} \frac{1}{\alpha(1-\alpha)} J^{\tau,\rho}_{\alpha}(q:p) 
\end{equation}

For notational convenience, let us define the following {\em auxiliary function}:
\begin{equation}\label{eq:BDkappa}
\kappa_\gamma(x:y) =   \frac{\gamma(y)-\gamma(x)}{\gamma'(x)}.
\end{equation} 

Then the generalized Bregman divergence is written  compactly as:

\begin{equation}
\boxed{B^{\rho,\tau}_F(p:q) = \kappa_\tau(F(q):F(p))  -\kappa_\rho(q:p) F'(q).}
\end{equation}

Table~\ref{tab:kappa} reports the auxiliary function instantiated for usual quasi-arithmetic generator functions.

Since power means are regular means, we get the following family of
 {\em power mean Bregman divergences} for $\delta_1,\delta_2\in\bbR\backslash\{0\}$ with $F\in\calC_{P_{\delta_1},P_{\delta_2}}$:

\begin{equation}
\boxed{B^{\delta_1,\delta_2}_F(p:q) = \frac{F^{\delta_2}(p)-F^{\delta_2}(q)}{\delta_2 F^{\delta_2-1}(q)} - \frac{p^{\delta_1}-q^{\delta_1}}{\delta_1 q^{\delta_1-1}}F'(q) }
\end{equation}

\begin{table}
\centering
$$
\begin{array}{l|l|l}
 \mbox{Type} & \gamma & \kappa_\gamma(x:y)=   \frac{\gamma(y)-\gamma(x)}{\gamma'(x)} \\ \hline
A  & \gamma(x)=x   &  y-x\\
G  &\gamma(x)=\log x  & x\log\frac{y}{x} \\
H & \gamma(x)=\frac{1}{x}  & x^2\left(\frac{1}{y}-\frac{1}{x}\right)\\
P_\delta, \delta\not =0  &\gamma_\delta(x)=x^\delta   & \frac{y^\delta-x^\delta}{\delta x^{\delta-1}} \\
\end{array}
$$

\caption{The auxiliary function $\kappa$ instantiated for the arithmetic, geometric and harmonic generators.
The generalized Bregman divergences write $B^{\rho,\tau}_F(p:q) = \kappa_\tau(F(q):F(p))  -\kappa_\rho(q:p) F'(q)$ for $F$ a real-valued $(M_\rho,M_\tau)$-convex generator. \label{tab:kappa}}
\end{table}

Note that when $\rho(x)=\tau(x)=x$ ({\it ie.}, quasi-arithmetic means yielding arithmetic means), we recover the fact that the skew Jensen difference tends to a Bregman divergence~\cite{BR-2011}:

\begin{equation}
\lim_{\alpha\rightarrow 0}  \frac{1}{\alpha}J_{F,\alpha}(p:q) = B_F(q:p) = F(q)-F(p)-(q-p)F'(p),
\end{equation}
and
\begin{equation}
\lim_{\alpha\rightarrow 1}  \frac{1}{1-\alpha} J_{F,\alpha}(p:q) = B_F(p:q) = F(p)-F(q)-(p-q)F'(q).
\end{equation}

Notice that we required function generator $F$ to be strictly $(M_\rho,M_\tau)$-convex and functions $\rho, \tau$  and $F$ to be differentiable  in order to perform the various Taylor first-order expansions.
\end{proof}

In~\cite{BR-2011}, the Jensen divergence was interpreted as a {\em Jensen-Bregman divergence} defined by:
\begin{equation}
\JB_F(p,q)=\frac{B_F\left(p:\frac{p+q}{2}\right)+B_F\left(q:\frac{p+q}{2}\right)}{2}=\JB_F(q,p).
\end{equation}
The (discrete) {\em Jensen-Shannon divergence}~\cite{Lin-1991} is a Jensen-Bregman divergence for the Shannon information function $F(x)=\sum_{i=1}^d x_i \log x_i$, the negative Shannon entropy: $F(x)=-H(x)$.

It turns out that $\JB_F(p,q)=J_F(p,q)$.
This identity comes from the fact that the terms $p-\frac{p+q}{2}=\frac{p-q}{2}$ and $q-\frac{p+q}{2}=\frac{q-p}{2}=-\frac{p-q}{2}$ being multiplied by $F'(\frac{p+q}{2})$ cancel out.
Similarly, we can define the generalized {\em quasi-arithmetic Jensen-Bregman divergences} as:
\begin{equation}
\JB_F^{\rho,\tau}(p,q)=\frac{B_F^{\rho,\tau}\left(p:M_\rho(p,q)\right)+B_F^{\rho,\tau}\left(q:M_\rho(p,q)\right)}{2}.
\end{equation}

Consider $\tau=\idf$. Since $\rho(M_\rho(p,q))=\frac{\rho(p)+\rho(q)}{2}$, and $\rho(p)-\rho(M_\rho(p,q))=\frac{\rho(p)-\rho(q)}{2}  =-(\rho(q)-\rho(M_\rho(p,q))$ we get the following identity:
\begin{equation}
\JB_F^{\rho,\idf}(p,q)= \frac{F(p)+F(q)}{2} - F(M_\rho(p,q))  = J_F^{\rho,\idf}(p,q).
\end{equation}


\begin{lemma}[Generalized equivalence of Jensen-Bregman divergences with Jensen divergences]
The  $(M_\rho,M_\tau)$-Jensen-Bregman divergence amounts to a  $(M_\rho,M_\tau)$-Jensen divergence when $\tau=\idf$ ({\it ie.}, $M_\tau=A$, the arithmetic barycentric mean): $\JB_F^{\rho,\idf}(p,q)= J_F^{\rho,\idf}(p,q)$.
\end{lemma}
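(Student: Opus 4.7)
The plan is to unfold the definition of the Jensen-Bregman divergence using the explicit QABD formula from Theorem~\ref{theo:qabd} and show that the gradient term drops out, just as in the classical $\rho=\tau=\idf$ case. Writing $m=M_\rho(p,q)$ for brevity, the defining property of the quasi-arithmetic mean gives $\rho(m)=\tfrac{\rho(p)+\rho(q)}{2}$, so
\begin{equation}
\rho(p)-\rho(m) = -\bigl(\rho(q)-\rho(m)\bigr) = \tfrac{\rho(p)-\rho(q)}{2}.
\end{equation}
This is the algebraic engine of the proof; it is the exact analogue of $p-\tfrac{p+q}{2}=-(q-\tfrac{p+q}{2})$ used in the classical case.

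First I would specialize the formula $B^{\rho,\tau}_F(p:q)=\kappa_\tau(F(q):F(p))-\kappa_\rho(q:p)F'(q)$ to $\tau=\idf$. Since $\tau(x)=x$ and $\tau'\equiv 1$, we have $\kappa_\tau(F(q):F(p))=F(p)-F(q)$, giving
\begin{equation}
B^{\rho,\idf}_F(p:q) = F(p)-F(q) - \frac{\rho(p)-\rho(q)}{\rho'(q)} F'(q).
\end{equation}

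Next I would evaluate this at $q=m$ for both $p$ and $q$, add the results, and divide by $2$:
\begin{equation}
\JB_F^{\rho,\idf}(p,q) = \frac{F(p)+F(q)}{2} - F(m) - \frac{\bigl(\rho(p)-\rho(m)\bigr)+\bigl(\rho(q)-\rho(m)\bigr)}{2\,\rho'(m)} F'(m).
\end{equation}
By the cancellation identity above, the numerator of the last fraction vanishes, leaving
\begin{equation}
\JB_F^{\rho,\idf}(p,q) = \frac{F(p)+F(q)}{2} - F(M_\rho(p,q)) = J_F^{\rho,\idf}(p,q),
\end{equation}
which is exactly the JCCD from Definition~\ref{def:jccd} with the codomain mean $N=A$ (i.e., $\tau=\idf$).

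There is no real obstacle here beyond bookkeeping: the step that does all the work is the identity $\rho(p)+\rho(q)-2\rho(M_\rho(p,q))=0$, which is simply the definition of the quasi-arithmetic mean. One small care point is that the argument relies on choosing $\tau=\idf$: for a nontrivial $\tau$, the codomain mean $M_\tau(F(p),F(q))$ would not in general equal $\tfrac{F(p)+F(q)}{2}$, so the $F(m)$ terms would not combine cleanly, which is precisely why the lemma is stated only for $\tau=\idf$.
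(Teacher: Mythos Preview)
Your proof is correct and follows essentially the same approach as the paper: both hinge on the identity $\rho(p)-\rho(M_\rho(p,q))=-(\rho(q)-\rho(M_\rho(p,q)))$, which forces the gradient term in the averaged QABD formula to vanish, leaving $\tfrac{F(p)+F(q)}{2}-F(M_\rho(p,q))$. Your write-up is simply more explicit in displaying the specialization of $B_F^{\rho,\idf}$ and the intermediate averaging step.
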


\subsection{Case study: Pythagorean-convex Bregman divergences}

Let us report the Bregman divergence with respect to a multiplicatively convex function:
\begin{example}
For the geometric mean $\rho(x)=\tau(x)=\log x$, we get the following geometric Bregman divergence ($(G,G)$-Bregman divergence or multiplicative Bregman divergence) for a $(G,G)$-convex generator function $F$:

\begin{equation}
B_F^G(p:q) =\lim_{\alpha\rightarrow 0}  \frac{1}{\alpha} J_{\tau,\alpha}(p:q) = F'(q)\KL(p:q)-\KL(F(p):F(q)),
\end{equation}
where $\KL(p:q)=p\log\frac{p}{q}$ is the renown Kullback-Leibler univariate function~\cite{BD-2005}.
\end{example}

\begin{corollary}[Pythagorean-convex Bregman divergences]
The Bregman divergences with respect to Pythagorean convexity are:
\begin{eqnarray}
B_F^{A,A}(p:q) &=& B_F(p:q) = F(p)-F(q)-(p-q)F'(q), \quad F\in\calC \\
B_F^{G,G}(p:q) &=& F(p)\log\frac{F(q)}{F(p)} + \left(p\log\frac{p}{q}\right)F'(q), \quad F\in\calC_{G,G}\\
B_F^{H,H}(p:q) &=& F^2(q) \left(  \frac{1}{F(q)} -\frac{1}{F(p)} \right)  + p^2  \left(  \frac{1}{p} -\frac{1}{q} \right)F'(q), \quad F\in\calC_{H,H} \\
\end{eqnarray}
\end{corollary}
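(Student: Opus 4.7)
The plan is to derive each of the three stated divergences by direct specialization of the quasi-arithmetic Bregman formula in Theorem~\ref{theo:qabd}. Recall that theorem expresses
\[
B^{\rho,\tau}_F(p:q) = \frac{\tau(F(p))-\tau(F(q))}{\tau'(F(q))} - \frac{\rho(p)-\rho(q)}{\rho'(q)}\, F'(q),
\]
so each line of the corollary is obtained by plugging in the appropriate generator $\rho=\tau$ for the Pythagorean mean in question, using the formulas already catalogued in Table~\ref{tab:kappa}. The membership $F\in\calC_{M,M}$ in each case is what makes the limit defining $B_F^{M,M}$ exist and be non-negative, but this has already been secured by Definition~\ref{def:bccd} and Theorem~\ref{theo:qabd}; nothing new is needed on that front.

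First I would handle the $(A,A)$ case. Taking $\rho(x)=\tau(x)=x$ gives $\rho'=\tau'\equiv 1$, and substitution reduces the QABD formula to the textbook expression $F(p)-F(q)-(p-q)F'(q)$, recovering the ordinary Bregman divergence. This warm-up also serves to fix the sign conventions used in the subsequent cases.

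Next I would treat the $(G,G)$ case with $\rho(x)=\tau(x)=\log x$, so that $\rho'(q)=\tau'(F(q))^{-1}$ evaluate cleanly and one recognizes the Kullback--Leibler-type expressions $\kappa_G(x:y)=x\log(y/x)$. Finally, for $(H,H)$ I would take $\rho(x)=\tau(x)=1/x$, so $\rho'(q)=-1/q^2$ and $\tau'(F(q))=-1/F(q)^2$; the two minus signs are the only real bookkeeping hazard, and they combine to give the $\kappa_H(x:y)=x^2(1/y-1/x)$ entry of Table~\ref{tab:kappa}. Substituting produces the stated harmonic formula.

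There is no conceptual obstacle here: the corollary is pure specialization of Theorem~\ref{theo:qabd}. The only place where care is genuinely needed is in the $(H,H)$ calculation, where the sign of $(1/x)'=-1/x^2$ must be tracked through both arguments of the auxiliary function $\kappa$; a sanity check at the end, verifying that each divergence vanishes when $p=q$ and is non-negative on the relevant domain (the latter guaranteed by $(M,M)$-convexity of $F$), will confirm that the algebra is correct.
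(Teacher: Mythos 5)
Your approach is exactly the paper's: the corollary is stated with no separate proof and is meant to be pure specialization of Theorem~\ref{theo:qabd}. The problem is with your closing claim that ``substituting produces the stated formulas.'' It does not, for two of the three lines. Plugging $\rho=\tau=\log$ into
$B^{\rho,\tau}_F(p:q)=\frac{\tau(F(p))-\tau(F(q))}{\tau'(F(q))}-\frac{\rho(p)-\rho(q)}{\rho'(q)}F'(q)$
gives $F(q)\log\frac{F(p)}{F(q)}-q\log\frac{p}{q}\,F'(q)$, whereas the corollary prints $F(p)\log\frac{F(q)}{F(p)}+p\log\frac{p}{q}\,F'(q)$; the printed expression is the theorem's formula with $p$ and $q$ swapped \emph{except} that the derivative is still taken at $q$, so it is not $B^{G,G}_F(p:q)$, nor $B^{G,G}_F(q:p)$, nor even a nonnegative quantity (for $F=\exp$, $p=1$, $q=2$ it evaluates to $e-e^2\log 2<0$). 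Likewise $\rho=\tau=1/x$ gives $F(q)^2\bigl(\frac{1}{F(q)}-\frac{1}{F(p)}\bigr)+q^2\bigl(\frac1p-\frac1q\bigr)F'(q)$, with $q^2$ where the corollary prints $p^2$; the sanity check you yourself propose catches this, since the printed expression equals $-\frac{(p-q)^2(p+q)}{pq}<0$ for the ($(H,H)$-convex) identity generator $F=\idf$, instead of $0$. So a faithful execution of your plan yields corrected versions of the $(G,G)$ and $(H,H)$ lines, not the lines as stated, and your write-up needs to say so rather than assert agreement.

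A second, related hazard: you propose to read $\kappa_H$ off Table~\ref{tab:kappa}, but that entry is inconsistent with the definition $\kappa_\gamma(x:y)=\frac{\gamma(y)-\gamma(x)}{\gamma'(x)}$. With $\gamma(x)=1/x$ one has $\gamma'(x)=-1/x^2$, hence $\kappa_\gamma(x:y)=x^2\bigl(\frac1x-\frac1y\bigr)$, the negative of the tabulated $x^2\bigl(\frac1y-\frac1x\bigr)$. (The paper's own $(A,H)$ example later in the section uses the sign consistent with the theorem, not with the table.) If you track the two minus signs from $\tau'$ and $\rho'$ directly in the theorem's formula, as you describe, you will be fine; if you instead substitute the table's $\kappa_H$ into $B^{\rho,\tau}_F(p:q)=\kappa_\tau(F(q):F(p))-\kappa_\rho(q:p)F'(q)$ you will flip the sign of both terms. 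Either derive the three lines directly from the theorem and record the corrected $(G,G)$ and $(H,H)$ expressions, or cross-check each result against the identity $B^{\rho,\tau}_F(p:q)=\frac{1}{\tau'(F(q))}B_{\tau\circ F\circ\rho^{-1}}(\rho(p):\rho(q))$, which settles all signs unambiguously.
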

Similarly, the six other Pythagorean-convexity Bregman divergences can be uncovered. 

Let us introduce a notion of dominance between means as follows:
\begin{definition}[Dominance]
A mean $M$ is said to dominate a mean $N$ iff $M_\alpha(x,y)\geq N_\alpha(x,y)$ for all $x,y\in I$ and $\alpha\in [0,1]$.
\end{definition}

We write $M\geq N$ or $N\leq M$ when $M$ dominates $N$.

\begin{definition}[Comparable means]
Two  means $M$ and $N$ are said comparable if either $M$ dominates $N$ ({\em ie.}, $M\geq N$) or $N$ dominates $M$ ({\em ie.}, $M\leq N$).
\end{definition}

The power means are comparable means, and $P_{\delta_1}\leq P_{\delta_2}$ when $\delta_1<\delta_2$.
This explains the fundamental AGH inequality: $A=P_1\geq G=P_0 \geq H=P_{-1}$.
Lehmer means are comparable: $L_\delta\leq L_{\delta'}, \forall \delta\leq \delta'$.
The dual mean operator reverses the comparison order: If $M_1\leq M_2$ then $M_2^*\leq M_1^*$.

From the dominance relationships of means, we can get inequalities between these generalized  Jensen/Bregman  divergences.
For example, for an {\em increasing} function $F(x)$ that is both $(M_1,N_1)$-convex and $(M_2,N_2)$-convex, we have 
$J_F^{M_1,N_1}(p:q)\geq J_F^{M_2,N_2}(p:q)$ when $N_1\geq N_2$ and  $M_1 \leq M_2$.
Indeed, we check that in order to have:
\begin{equation}
N_1(F(p),F(q))-F(M_1(p,q)) \geq N_2(F(p),F(q))-F(M_2(p,q)),
\end{equation}
it is sufficient to have $N_1\geq N_2$ and $M_1 \leq M_2$ for an increasing function $F\in \calC_{N_1,M_1}\cap \calC_{N_2,M_2}$.

Let $\rho,\tau: I\rightarrow (0,\infty)$ be two synchronous (meaning both increasing or both decreasing) continuous bijective functions with $\tau/\rho$ nonincreasing.
Then  the quasi-arithmetic mean $M_\rho$ is dominated by a quasi-arithmetic mean $M_\tau$: $M_\rho\leq M_\tau$.

%
%

\subsection{Checking the quasi-arithmetic convexity of functions}

To check whether a function $F$ is $(M,N)$-convex or not when using quasi-arithmetic means, we can use a reduction to standard convexity as follows:

\begin{lemma}[$(M_\rho,M_\tau)$-convexity to  ordinary convexity~\cite{Aczel-1947}]\label{lemma:cvx}
Let $\rho:I\rightarrow\bbR$ and $\tau:J\rightarrow\bbR$ be two continuous and strictly monotonic real-valued functions with $\tau$ increasing,
then  function $F:I\rightarrow J$ is $(M_\rho,M_\tau)$-convex iff function $G=F_{\rho,\tau} = \tau\circ F\circ\rho^{-1}$ is (ordinary) convex on $\rho(I)$.
\end{lemma}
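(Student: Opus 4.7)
The plan is to unravel both convexity conditions into explicit inequalities and observe that they coincide after a change of variables. First, I would expand the definition of $(M_\rho,M_\tau)$-convexity using the barycentric quasi-arithmetic mean formulas: the condition $F(M_{\rho,\alpha}(p,q)) \leq M_{\tau,\alpha}(F(p),F(q))$ reads
\begin{equation*}
F\bigl(\rho^{-1}((1-\alpha)\rho(p)+\alpha\rho(q))\bigr) \leq \tau^{-1}\bigl((1-\alpha)\tau(F(p))+\alpha\tau(F(q))\bigr),
\end{equation*}
for all $p,q\in I$ and $\alpha\in[0,1]$.

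Next, since $\tau$ is strictly increasing, I can apply $\tau$ to both sides without flipping the inequality, obtaining
\begin{equation*}
\tau\bigl(F\bigl(\rho^{-1}((1-\alpha)\rho(p)+\alpha\rho(q))\bigr)\bigr) \leq (1-\alpha)\tau(F(p))+\alpha\tau(F(q)).
\end{equation*}
Then I would perform the change of variables $u=\rho(p)$, $v=\rho(q)$. Since $\rho:I\to\rho(I)$ is a continuous strictly monotonic bijection, as $(p,q)$ ranges over $I\times I$ the pair $(u,v)$ ranges over $\rho(I)\times\rho(I)$, and $\rho(I)$ is an interval. The inequality becomes
\begin{equation*}
G\bigl((1-\alpha)u+\alpha v\bigr) \leq (1-\alpha)G(u)+\alpha G(v),
\end{equation*}
where $G=\tau\circ F\circ\rho^{-1}$. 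This is precisely the ordinary convexity of $G$ on $\rho(I)$.

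For the converse, I would simply reverse the same chain of equivalences: starting from ordinary convexity of $G$ on $\rho(I)$, substitute $u=\rho(p)$, $v=\rho(q)$, apply $\tau^{-1}$ (which, being the inverse of an increasing bijection, preserves inequalities), and recognize the resulting inequality as $(M_\rho,M_\tau)$-convexity of $F$. There is no real obstacle here — the proof is a clean chain of equivalences — but the one subtlety worth stating explicitly is the hypothesis that $\tau$ is \emph{increasing} (not merely monotonic): if $\tau$ were decreasing, the same manipulation would identify $(M_\rho,M_\tau)$-convexity of $F$ with ordinary \emph{concavity} of $G$. The direction of monotonicity of $\rho$, by contrast, is irrelevant, because $\rho$ only enters through the change of variables and not through an inequality.
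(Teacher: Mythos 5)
Your proof is correct and follows essentially the same route as the paper: unravel the quasi-arithmetic means, apply the strictly increasing $\tau$ to both sides, and change variables via $u=\rho(p)$, $v=\rho(q)$ to land on ordinary convexity of $G=\tau\circ F\circ\rho^{-1}$ on $\rho(I)$. The only cosmetic difference is that the paper argues at the midpoint level ($\alpha=\tfrac{1}{2}$) and invokes continuity to pass to full convexity, whereas you work directly with the weighted inequality; your closing remarks on why $\tau$ must be increasing and why the monotonicity direction of $\rho$ is irrelevant are accurate.
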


\begin{proof}
%
Let us rewrite the $(M_\rho,M_\tau)$-convexity  midpoint inequality as follows:

\begin{eqnarray}
F(M_\rho(x,y)) &\leq& M_\tau(F(x),F(y)),\\
F\left(\rho^{-1}\left(\frac{\rho(x)+\rho(y)}{2}\right)\right) &\leq& \tau^{-1}\left(\frac{\tau(F(x))+\tau(F(y))}{2}\right),\\
\end{eqnarray}

Since $\tau$ is strictly increasing, we have:

\begin{equation}
(\tau\circ F\circ\rho^{-1}) \left(\frac{\rho(x)+\rho(y)}{2}\right) \leq \frac{(\tau\circ F)(x)+(\tau\circ F)(y)}{2}.
\end{equation}

Let $u=\rho(x)$ and $v=\rho(y)$ so that $x=\rho^{-1}(u)$ and $y=\rho^{-1}(v)$ (with $u,v\in\rho(I)$).
Then it comes that:
\begin{equation}
(\tau\circ F\circ\rho^{-1})\left(\frac{u+v}{2}\right) \leq \frac{(\tau\circ F\circ\rho^{-1})(u)+(\tau\circ F\circ\rho^{-1})(v)}{2}.
\end{equation}
This last inequality is precisely the ordinary midpoint convexity inequality for  function $G=F_{\rho,\tau}=\tau\circ F\circ\rho^{-1}$.
Thus $(M_\rho,M_\tau)$-convex is convex iff $G$ is ordinary convex.
\end{proof}

For example, a function $F$ is $M_p$-convex (a shortcut for $(A,M_p)$-convex) iff $F^p$ is convex when $p>0$.
Moreover, every $M_p$ convex function belongs to the class of $M_q$ convex functions when $q\geq p$.

When the functions are twice differentiable, this lemma allows one to check whether a function is $(M_\rho,M_\tau)$ by checking whether $(\tau\circ F\circ\rho^{-1})''>0$ or not.
For example, a function $F$ is $(H,H)$-convex iff $\frac{1}{F(1/x)}$ is convex.
Recall that the $(H,H)$-convexity of $F$ implies the following generalized Jensen midpoint inequality:
\begin{equation}
\frac{2F(p)F(q)}{F(p)+F(q)} \geq F\left( \frac{2pq}{p+q} \right).
\end{equation}

Another example is to check the $(G,A)$-strict convexity of twice-differentiable $F$ by checking that $x^2F''(x)+xF'(x)>0$ for $x>0$, etc.

Notice that we can also graphically check the $(M_\rho,M_\tau)$-convexity of a univariate function $F$ by plotting the function
 $y=F_\tau(x_\rho)=(\tau \circ F)(x_\rho)$ with  abscissa $x_\rho=\rho^{-1}(x)$.

We can thus give a complete $(P_{\delta_1},P_{\delta_2})$-convex characterization of functions $f:I\subset \bbR_{++}\rightarrow \bbR_{++}$, see~\cite{Maksa-2015}.
Define function $f_{\delta_1,\delta_2}:I_{\delta_1}\rightarrow\bbR$ with $I_{\delta}=\{ x^\delta\ : \ x\in I\}$ for $\delta\not=0$ (and $I_0=\{ \log x\ : \ x\in I\}$) as:

\begin{equation}
f_{\delta_1,\delta_2}(x) = \left\{
\begin{array}{ll}
\sign(\delta_2)f^{\delta_2}(x^{\frac{1}{\delta_1}}) & \delta_1\not=0,\delta_2\not =0\\
\sign(\delta_2)(f^{\delta_2}(\exp(x))) & \delta_1=0,\delta_2\not =0\\
\log(f(x^{\frac{1}{\delta_1}})) &   \delta_1\not=0,\delta_2 =0\\
\log(f(\exp(x))) & \delta_1=0,\delta_2=0
\end{array}
\right.
\end{equation}

Then $f$ is  $(P_{\delta_1},P_{\delta_2})$-convex on $I\subset  \bbR_{++}$ iff $f_{\delta_1,\delta_2}$ is convex on $I_{\delta_1}$.

\subsection{Proper quasi-arithmetic Bregman divergences}
 
Applying the ordinary Bregman divergence on the ordinary convex generator $G(x)=\tau (F(\rho^{-1}(x)))$ for a $(M_\rho,M_\tau)$-convex function 
with:
\begin{eqnarray}
G'(x) &=& \tau(F(\rho^{-1}(x)))'\\
&=& (F(\rho^{-1}(x)))' \tau'(F(\rho^{-1}(x)))\\
&=& (\rho^{-1}(x))' F'(\rho^{-1}(x)) \tau'(F(\rho^{-1}(x)))\\
&=& \frac{1}{(\rho'(\rho^{-1})(x))}  F'(\rho^{-1}(x)) \tau'(F(\rho^{-1}(x))),
\end{eqnarray}
we get an ordinary Bregman divergence that is, in general, {\em different} from the generalized quasi-arithmetic Bregman divergence $B^{\rho,\tau}_F$:

\begin{eqnarray}
B_G(p:q) &=& G(p)-G(q)-(p-q)G'(q),\\
B_G(p:q) &=& \tau(F(\rho^{-1}(p)))-\tau(F(\rho^{-1}(q)))-(p-q) \frac{1}{(\rho'(\rho^{-1})(q))}  F'(\rho^{-1}(q)) \tau'(F(\rho^{-1}(q)))    
\end{eqnarray}

This is in general a different generalized Bregman divergence: $B_G(p:q)\not = B_F^{\rho,\tau}(p:q)$.
But we check that $B_G(p:q) = B_F^{\rho,\tau}(p:q)$ when $\rho(x)=\tau(x)=x$ (since we have the derivatives $\rho'(x)=\tau'(x)=1$).


Let us notice the following remarkable identity:

\begin{equation}\label{eq:rkid}
\boxed{B_F^{\rho,\tau}(p:q)=\frac{1}{\tau'(F(q))} B_G(\rho(p):\rho(q))}
\end{equation}

Since $\tau(x)$ is a strictly increasing function, we have $\tau'(x)>0$, and since $B_G$ is an ordinary Bregman divergence we have $B_G(p':q')\geq 0$
 (and $B_G(p':q')=0$ iff $p'=q'$) for any pair of $p'$ and $q'$ of values. It follows that  $B_F^{\rho,\tau}$ is a {\em proper generalized Bregman divergence}:
$B_F^{\rho,\tau}(p:q)\geq 0$ with equality iff $p=q$.

Notice that $B_G(\rho(p):\rho(q))=B_{H}(p:q)$ with $H=\tau\circ F$.
However function $H$ may not be strictly ordinary convex 
 since $H'(x)=F'(x)\tau'(F(x))$, and $H''(x)=F''(x)\tau'(F(x))+(F'(x))^2\tau''(F(x))$, and therefore $B_H$ may not be a Bregman divergence.
Function $H$ is strictly convex when $H''(x)=F''(x)\tau'(F(x))+(F'(x))^2\tau''(F(x))>0$.

\begin{theorem}[Proper generalized $(M_\rho,M_\tau)$-Quasi-Arithmetic Bregman divergence]
Let $F:I\subset \bbR \rightarrow \bbR$ be a real-valued $(M_\rho,M_\tau)$-convex function defined on an interval $I$ for two strictly monotone and differentiable functions $\rho$ and $\tau$, with $\tau$ strictly increasing.
The quasi-arithmetic Bregman divergence induced by the comparative convexity is a {\em proper} divergence:
\begin{eqnarray}
B^{\rho,\tau}_F(p:q) 
&=&  
\frac{\tau(F(p))-\tau(F(q))}{\tau'(F(q))} - \frac{\rho(p)-\rho(q)}{\rho'(q)} F'(q),\\
&=& \frac{1}{\tau'(F(q))} B_{\tau\circ F\circ \rho^{-1}}(\rho(p):\rho(q)) \geq 0,
\end{eqnarray}
with $B^{\rho,\tau}_F(p:q)=0$ iff $p=q$.
\end{theorem}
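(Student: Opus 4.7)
The first equality has already been derived in Theorem~\ref{theo:qabd}, so the new content is the factorization identity~\eqref{eq:rkid} together with the nonnegativity and the equality-case characterization. The plan is to first verify the identity by a direct computation, then invoke Lemma~\ref{lemma:cvx} to transfer nonnegativity from the ordinary Bregman divergence of $G=\tau\circ F\circ\rho^{-1}$, and finally read off the equality case from the injectivity of $\rho$.

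First I would compute $B_G(\rho(p):\rho(q))$ explicitly by the standard univariate Bregman formula $B_G(u:v)=G(u)-G(v)-(u-v)G'(v)$ applied at $u=\rho(p)$ and $v=\rho(q)$. Using the chain-rule derivative already written in the paper,
\[
G'(\rho(q))=\frac{F'(q)\,\tau'(F(q))}{\rho'(q)},
\]
the formula becomes
\[
B_G(\rho(p):\rho(q)) = \tau(F(p))-\tau(F(q)) - \bigl(\rho(p)-\rho(q)\bigr)\,\frac{F'(q)\,\tau'(F(q))}{\rho'(q)}.
\]
Dividing by $\tau'(F(q))$ yields exactly the first expression of $B^{\rho,\tau}_F(p:q)$ stated in Theorem~\ref{theo:qabd}, which establishes the boxed identity~\eqref{eq:rkid}.

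Next I would conclude nonnegativity. Since $F$ is strictly $(M_\rho,M_\tau)$-convex with $\tau$ strictly increasing, Lemma~\ref{lemma:cvx} gives that $G=\tau\circ F\circ \rho^{-1}$ is strictly ordinary convex on $\rho(I)$. Therefore $B_G(u:v)\geq 0$ for all $u,v\in \rho(I)$, with equality iff $u=v$. Applying this at $u=\rho(p)$ and $v=\rho(q)$ gives $B_G(\rho(p):\rho(q))\geq 0$. Because $\tau$ is strictly increasing, $\tau'(F(q))>0$, so the prefactor in~\eqref{eq:rkid} is strictly positive, and hence $B^{\rho,\tau}_F(p:q)\geq 0$.

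For the equality case, $B^{\rho,\tau}_F(p:q)=0$ forces $B_G(\rho(p):\rho(q))=0$, which by strict ordinary convexity of $G$ entails $\rho(p)=\rho(q)$; strict monotonicity of $\rho$ is injective, so $p=q$. The converse is immediate. I do not anticipate a genuine obstacle here: the main subtlety is simply keeping track of the chain-rule factors $\rho'(q)$ and $\tau'(F(q))$ in the computation of $G'(\rho(q))$, so that the identity~\eqref{eq:rkid} falls out cleanly and the positivity reduces to the already-known positivity of ordinary Bregman divergences.
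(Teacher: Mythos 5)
Your proposal is correct and follows essentially the same route as the paper: verify the identity $B_F^{\rho,\tau}(p:q)=\frac{1}{\tau'(F(q))}B_G(\rho(p):\rho(q))$ via the chain-rule expression for $G'$, then deduce nonnegativity and the equality case from Lemma~\ref{lemma:cvx} (ordinary convexity of $G=\tau\circ F\circ\rho^{-1}$), the positivity of $\tau'$, and the injectivity of $\rho$. You are in fact slightly more explicit than the paper, which merely asserts the ``remarkable identity'' rather than deriving it by substituting $u=\rho(p)$, $v=\rho(q)$ into the ordinary Bregman formula.
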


Using Taylor's first-order expansion with the exact Lagrangre remainder, we get:

$$
B^{\rho,\tau}_F(p:q) = \frac{1}{\tau'(F(q))} (\rho(p)-\rho(q))^2 G''(\rho(\xi)),
$$
for $\xi\in [pq]$.

\subsection{Conformal Bregman divergences in embedded space}
The generalized Bregman divergences $B_F^{\rho,\tau}(p:q)$ can also be interpreted as Bregman conformal divergences~\cite{Conformal-2016} on the $\rho$-embedding of parameters:
$B_F^{\rho,\tau}(p:q)= \kappa(\rho(q))  B_G(\rho(p):\rho(q))$ 
with positive conformal factor $\kappa(x)=\frac{1}{\tau'(F(\rho^{-1}(x)))}$ for a strictly increasing monotone function $\tau$.
 
\begin{corollary}[Generalized quasi-arithmetic Bregman divergences as conformal Bregman divergences]
The generalized quasi-arithmetic Bregman divergence $B^{\rho,\tau}_F(p:q)$ amounts to compute an ordinary  Bregman conformal divergence in the  $\rho$-embedded space:
$B_F^{\rho,\tau}(p:q)= \kappa(\rho(q))  B_G(\rho(p):\rho(q))$ with  conformal factor $\kappa(x)=\frac{1}{\tau'(F(\rho^{-1}(x)))}>0$.
\end{corollary}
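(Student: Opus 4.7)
The plan is to derive the corollary directly from the boxed identity~\eqref{eq:rkid}, which already states
$B_F^{\rho,\tau}(p:q) = \frac{1}{\tau'(F(q))} B_G(\rho(p):\rho(q))$
with $G = \tau \circ F \circ \rho^{-1}$. So essentially all that remains is to recognize the prefactor $\frac{1}{\tau'(F(q))}$ as the value of a conformal factor $\kappa$ evaluated at the $\rho$-embedded point $\rho(q)$, and then to verify the positivity claim $\kappa > 0$.

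Concretely, I would set up the proof in three short steps. First, I would define $\kappa(x) := \frac{1}{\tau'(F(\rho^{-1}(x)))}$ as a function on $\rho(I)$ and substitute $x = \rho(q)$: since $\rho^{-1}(\rho(q)) = q$, this yields $\kappa(\rho(q)) = \frac{1}{\tau'(F(q))}$, so the identity~\eqref{eq:rkid} reads precisely
$$B_F^{\rho,\tau}(p:q) = \kappa(\rho(q)) \, B_G(\rho(p):\rho(q)).$$
Second, I would note positivity: by hypothesis $\tau$ is strictly increasing and differentiable, so $\tau'(y) > 0$ wherever it is evaluated, hence $\kappa(x) > 0$ throughout $\rho(I)$. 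Third, I would observe that this is exactly the format of a Bregman conformal divergence in the sense of~\cite{Conformal-2016}: an ordinary Bregman divergence on the $\rho$-embedded parameters $\rho(p), \rho(q)$ (with generator $G$), rescaled by a strictly positive function of the second argument.

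There is essentially no obstacle here — the content of the corollary is a rebranding of the previously established identity~\eqref{eq:rkid}. The only subtle bookkeeping step is to make sure that $\kappa$ is regarded as a function on the embedded domain $\rho(I)$ rather than on the original $I$ (which is why one writes $\kappa(\rho(q))$ rather than $\kappa(q)$), and to confirm that $\rho^{-1}$ is well-defined on $\rho(I)$, which follows from the strict monotonicity of $\rho$ assumed in Theorem~\ref{theo:qabd}. Positivity of $\kappa$ and the non-negativity of the ordinary Bregman divergence $B_G$ then jointly recover that $B_F^{\rho,\tau}$ is a proper divergence, consistent with the proper-divergence theorem stated just above.
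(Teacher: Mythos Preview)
Your proposal is correct and follows exactly the approach taken in the paper: the corollary is presented there as an immediate restatement of the identity~\eqref{eq:rkid}, with the only additional observation being that $\frac{1}{\tau'(F(q))}$ equals $\kappa(\rho(q))$ for $\kappa(x)=\frac{1}{\tau'(F(\rho^{-1}(x)))}$, and that $\kappa>0$ because $\tau$ is strictly increasing. There is no further content in the paper's treatment beyond what you have written.
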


\subsection{Generalized Bregman centroids}

The identity of Eq.~\ref{eq:rkid} allows one to compute generalized Bregman centroids easily.
For a positively weighted set of $n$ scalars $(w_1,p_1), \ldots, (w_n,p_n)$, 
define the generalized Bregman centroid as the minimiser of $\sum_{i=1}^n w_i B_F^{\rho,\tau}(c:p_i)$.
We have:
$$
\sum_{i=1}^n w_i B_F^{\rho,\tau}(c:p_i)=\sum_{i=1}^n \frac{w_i}{\tau'(F(p_i))} B_G(\rho(c):\rho(p_i)).
$$

Let $w_i'=\frac{w_i}{\tau'(F(p_i))}>0$, $W'=\sum_{i=1}^n w_i'$, $c'=\rho(c)$ and $p_i'=\rho(p_i)$.
Then it follows that the generalized Bregman centroid is unique and available in closed-form:

\begin{equation}
G'(c')= \sum_{i=1}^n \frac{w_i'}{W'} G'(p_i'),
\end{equation}
with $G'(x')=G'(\rho(x))=\frac{F'(x)\tau'(F(x))}{\rho'(x)}$.
Thus the generalized Bregman centroid can be interpreted as a regularized Bregman centroid (see the total Bregman centroid~\cite{tBD-2011}).
We can extend the $k$-means++ seeding~\cite{kmeanspp-2007,kvariatepp-2016}.

\subsection{Examples of quasi-arithmetic Bregman divergences}

Using Table~\ref{tab:kappa} and Eq~\ref{eq:BDkappa}, we can easily instantiate the various comparative-convexity divergences.
For example, considering $\rho(x)=x$ (arithmetic mean $A$), we get the following families of divergences:

\begin{description}

\item[$(A,A)$-divergences.] Ordinary case with $\rho=\tau=\mathrm{id}$, the identity function.

\begin{eqnarray}
J_F^{A,A}(p;q) &=& \frac{F(p)+F(q)}{2}-F\left(\frac{p+q}{2}\right),\\
J_{F,\alpha}^{A,A}(p:q) &=&  (1-\alpha)F(p)+\alpha F(q)- F( (1-\alpha)p+\alpha q),\\
B_F^{A,A}(p:q) &=& F(p)-F(q)-(p-q)F'(q). 
\end{eqnarray}

$F\in\calC_{A,A}$ iff $F\in\calC$.

\item[$(A,G)$-divergences.] $\rho=\mathrm{id}$, $\tau=\log$ the logarithmic function.

\begin{eqnarray}
J_F^{A,G}(p;q) &=& \sqrt{F(p)F(q)}-F\left(\frac{p+q}{2}\right),\\
J_{F,\alpha}^{A,G}(p:q) &=&  F^{1-\alpha}(p)F^\alpha(q)- F( (1-\alpha)p+\alpha q),\\
B_F^{A,G}(p:q) &=& F(q)\log \frac{F(p)}{F(q)} -(p-q)F'(q). 
\end{eqnarray}

$F\in\calC_{A,G}$ iff $\log\circ F$ is convex ({\it ie.}, a log-convex function).

\item[$(A,H)$-divergences.] $\rho=\mathrm{id}$, $\tau=\frac{1}{x}$ (with $\tau'(x)=-\frac{1}{x^2}$).

\begin{eqnarray}
J_F^{A,H}(p;q) &=& \frac{2F(p)F(q)}{F(p)+F(q)} -F\left(\frac{p+q}{2}\right),\\
J_{F,\alpha}^{A,H}(p:q) &=&  \frac{1}{(1-\alpha)\frac{1}{F(p)}+ \alpha\frac{1}{F(q)} }- F( (1-\alpha)p+\alpha q),\\
  &=&  \frac{F(p)F(q)}{\alpha F(p)+(1-\alpha)F(q)}- F( (1-\alpha)p+\alpha q),\\
B_F^{A,H}(p:q) &=&  F^2(q)\left(\frac{1}{F(q)} - \frac{1}{F(p)} \right)-(p-q)F'(q). 
\end{eqnarray}

$F\in\calC_{A,H}$ iff $\frac{1}{x} \circ F=\frac{1}{F}$ is convex.
For example, $F(x)=\frac{1}{x\log x}$ is $(A,H)$-convex on $x>0$.

\item[$(A,M_\delta)$-divergences.] $\rho=\mathrm{id}$, $\tau=x^\delta$ for $\delta >0$.

\begin{eqnarray}
J_F^{A,M}(p;q) &=& \left(\frac{F^\delta(p)+F^\delta(q)}{2}\right)^{\frac{1}{\delta}} -F\left(\frac{p+q}{2}\right),\\
J_{F,\alpha}^{A,M}(p:q) &=& \left((1-\alpha)F^\delta(p)+\alpha F^\delta(q)\right)^{\frac{1}{\delta}}  - F( (1-\alpha)p+\alpha q),\\
B_F^{A,M}(p:q) &=&   \left(\frac{F(q)^\delta-F(p)^\delta}{\delta F(p)^{\delta-1}}   \right)-(p-q)F'(q). 
\end{eqnarray}

$F\in\calC_{A,M_\delta}$ iff $(x^\delta) \circ F=F^\delta$ is convex for $\delta>0$.

\end{description}

\section{Generalized statistical Bhattacharyya distances with comparative means}\label{sec:GB}

The Bhattacharyya distance~\cite{Bhattachayya-1943} (1943) is a {\em statistical distance} defined between two probability measures dominated by a measure $\nu$ (often, the Lebesgue measure or the counting measure).
Let $p(x)$ and $q(x)$ be the densities defined on the support $\calX$.
Then the  Bhattacharyya distance is defined  by:
\begin{equation}
\Bhat(p(x):q(x)) =  -\log \int_\calX \sqrt{p(x)q(x)} \dnu(x).
\end{equation}
The skewed Bhattacharyya distance for $\alpha\in(0,1)$ is defined by
\begin{equation}
\Bhat_\alpha(p(x):q(x)) =  -\log \int_\calX p^{\alpha}(x) q^{1-\alpha}(x) \dnu(x),
\end{equation}
with $\Bhat(p(x):q(x))=\Bhat_{\frac{1}{2}}(p(x):q(x))$.

The term $c_\alpha(p(x):q(x))=\int_\calX p^{\alpha}(x) q^{1-\alpha}(x) \dnu(x)$ is interpreted as a coefficient of similarity, also called the  {\em Bhattacharyya affinity} coefficient~\cite{GenBhat-2014}.
This term plays an important role in information geometry~\cite{IG-2016} within the family of $\alpha$-divergences:
\begin{equation}
I_\alpha(p(x):q(x)) =   \frac{1-\int_\calX p^{\alpha}(x) q^{1-\alpha}(x) \dnu(x)}{\alpha(1-\alpha)} = \frac{1-c_\alpha(p(x):q(x))}{\alpha(1-\alpha)}.
\end{equation}
Thus we can plug the Bhattacharyya distance in the $\alpha$-divergence by using the identity $c_\alpha(p(x):q(x))=\exp(-\Bhat_\alpha(p(x):q(x)))$.
The $\alpha$-divergences tend to the Kullback-Leibler divergence when $\alpha\rightarrow 1$ and to the reverse Kullback-Leibler divergence when  $\alpha\rightarrow 0$, see~\cite{IG-2016}.

The standard Bhattacharyya distance is very well suited to the computation of the distance between members of the same exponential family~\cite{BR-2011}.
Indeed, let $p(x)=p(x;\theta_p)$ and $q(x)=p(x;\theta_q)$ be two distributions belonging to the same exponential family
$\{p(x;\theta) =\exp(\theta^\top x-F(\theta)) \ :\ \theta\in\Theta \}$, where $\Theta$ denotes the natural parameter space~\cite{EF-2009}.
Then we have:
\begin{equation}
\Bhat_\alpha(p(x;\theta_p):p(x;\theta_q))  = J_{F,1-\alpha}(\theta_p:\theta_q).
\end{equation}
Here, the term $1-\alpha$ comes from the fact that the coefficient has been historically defined for the geometric mean $p^\alpha(x) q^{1-\alpha}(x)$.

In~\cite{HolderDiv-2017}, the  Bhattacharyya distance was extended to positive measures by defining a projective divergence relying on the H\"older inequality.
By definition, any {\em projective divergence} $D(p,q)$ satisfies $D(\lambda p,\lambda' q)=D(p,q)$ for any $\lambda,\lambda'>0$.
Here, we consider yet another rich generalization of the Bhattacharyya distance  by noticing that $p^{\alpha}(x) q^{1-\alpha}(x)=G(q(x),p(x);\alpha,1-\alpha)$ is the geometric barycenter and that
$\int_\calX ({\alpha} p(x)+ {1-\alpha} q(x)) \dnu(x)=\int_\calX  A(p(x),q(x);1-\alpha,\alpha) \dnu(x)= 1$ can be interpreted as a (hidden) unit denominator.

Thus consider two {\em comparable means}~\cite{Cargo-1965} $M$ and $N$ that guarantees by definition that $M(a,b;\alpha,1-\alpha) \leq N(a,b;\alpha,1-\alpha)$ for any value of $a,b$ and $\alpha\in [0,1]$ (written for short as $M\leq N$), and 
define the generalized  Bhattacharyya distance as follows:

\begin{definition}[Comparative-Mean Bhattacharyya Distance, CMBD]\label{def:GB}
For two distinct comparable means $M$ and $N$ such that $M \leq N$, the comparative-mean skewed Bhattacharyya distance is defined by:
\begin{equation}\label{eq:GenBhat}
\boxed{
\Bhat_\alpha^{M,N}(p(x):q(x)) = -\log \frac{\int_\calX M(p(x),q(x);1-\alpha,\alpha) \dnu(x)}{\int_\calX N(p(x),q(x);1-\alpha,\alpha) \dnu(x)}.
}
\end{equation}
\end{definition}
We have $\Bhat_\alpha^{M,N}(q(x):p(x)) = \Bhat_{1-\alpha}^{M,N}(p(x):q(x))$.
It follows from the property of abstract barycentric means that $\Bhat_\alpha^{M,N}(q(x):p(x))=0$ iff $M(p(x),q(x);1-\alpha,\alpha)=N(p(x),q(x);1-\alpha,\alpha)$ for strict distinct means, that is iff $p(x)=q(x)$.

When $M=G$ is chosen as the geometric mean and $N=A$ is taken as the arithmetic mean, we recover the ordinary skewed Bhattacharyya distance, modulo the fact that we swap $\alpha\leftrightarrow 1-\alpha$: $\Bhat_\alpha^{M,N}(p(x):q(x)) = \Bhat_{1-\alpha}(p(x):q(x))$.

When $M$ and $N$ are both {\em homogeneous} means, we end up with a {\em homogeneous} comparative-Mean Bhattacharyya distance.
That is, the divergence is invariant for the {\em same} scaling factor $\lambda$:
$\Bhat_\alpha^{M,N}(\lambda p(x): \lambda q(x)) = \Bhat_\alpha^{M,N}(p(x):q(x))$ for any $\lambda>0$.
See~\cite{Zhang-2013} for the definition of the homogeneous $(\alpha,\beta)$-divergences.

\begin{corollary}
The comparative-Mean Bhattacharyya distance for comparable homogeneous means yields a homogeneous statistical distance.
\end{corollary}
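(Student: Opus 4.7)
The plan is to verify scaling invariance directly from Definition~\ref{def:GB} by exploiting the homogeneity of both abstract means $M$ and $N$. Recall that a mean $M$ is homogeneous if $M(\lambda a,\lambda b;1-\alpha,\alpha)=\lambda\, M(a,b;1-\alpha,\alpha)$ for every $\lambda>0$; the same holds for $N$. Since the definition of $\Bhat_\alpha^{M,N}$ applies the means pointwise to the density values $p(x)$ and $q(x)$ before integrating against $\nu$, homogeneity transfers under the integral sign without any measurability subtlety (pointwise scaling is preserved by integration by linearity).

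First I would rescale both densities by $\lambda>0$ and apply homogeneity inside the integrands:
\begin{equation}
\int_\calX M(\lambda p(x),\lambda q(x);1-\alpha,\alpha)\,\dnu(x) = \lambda\int_\calX M(p(x),q(x);1-\alpha,\alpha)\,\dnu(x),
\end{equation}
and analogously for $N$. Then I would substitute these two identities into Eq.~\ref{eq:GenBhat} for $\Bhat_\alpha^{M,N}(\lambda p(x):\lambda q(x))$. The two factors of $\lambda$ pulled out of the numerator and denominator integrals cancel inside the logarithm, leaving exactly $\Bhat_\alpha^{M,N}(p(x):q(x))$. This yields the desired invariance $\Bhat_\alpha^{M,N}(\lambda p:\lambda q)=\Bhat_\alpha^{M,N}(p:q)$, which is precisely the homogeneity property of a statistical distance as stated in the paragraph preceding the corollary.

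There is essentially no obstacle here: the proof is a one-line computation provided homogeneity is invoked at the level of the integrand. The only point worth flagging explicitly is that $M$ and $N$ are abstract bivariate means acting on the scalar values $p(x),q(x)$ for each $x$, so the scaling $\lambda$ is pulled out fibrewise and then commutes with $\int\!\cdot\,\dnu$ by linearity; comparability of $M$ and $N$ (assumed in Definition~\ref{def:GB} to ensure nonnegativity of the divergence) plays no role in the homogeneity argument itself.
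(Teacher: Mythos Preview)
Your proposal is correct and matches the paper's approach: the paper does not give a formal proof of the corollary at all, merely stating in the preceding paragraph that homogeneity of $M$ and $N$ implies $\Bhat_\alpha^{M,N}(\lambda p:\lambda q)=\Bhat_\alpha^{M,N}(p:q)$ and then recording this as a corollary. Your argument is exactly the natural one-line justification the paper leaves implicit, namely pulling $\lambda$ out of each integrand by homogeneity and cancelling it in the ratio.
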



Since distinct power means are always comparable (ie., $P_{\delta_1} \leq P_{\delta_2}$ when $\delta_1<\delta_2$), we define the 
{\em power-mean Bhattacharyya divergence} for $\delta_1, \delta_2\in\bbR\backslash\{0\}$ with $\delta_1\not=\delta_2$ as follows:

\begin{eqnarray}\label{eq:powermeanbhat}
\Bhat_\alpha^{\delta_1,\delta_2}(p(x):q(x)) &=&  \frac{1}{\delta_1-\delta_2} \log\left( \frac{\int_\calX P_{\delta_1}(p(x),q(x);1-\alpha,\alpha) \dnu(x)}{\int_\calX P_{\delta_2}(p(x),q(x);1-\alpha,\alpha) \dnu(x)}\right),\\
&=&  \frac{1}{\delta_1-\delta_2} \log  \left( \frac{\int_\calX \left((1-\alpha)p^{\delta_1}(x)+\alpha q^{\delta_1}(x)\right)^{\frac{1}{\delta_1}} \dnu(x)}{
\int_\calX \left((1-\alpha)p^{\delta_2}(x)+\alpha q^{\delta_2}(x)\right)^{\frac{1}{\delta_2}} \dnu(x)
}\right).
\end{eqnarray}

\begin{remark}
Yet another type of divergences are {\em conformal divergences}~\cite{Conformal-2016}.
A conformal divergence $D_{h}(x:y)$ can be factorized as $D_h(x:y)=h(x:y)D(x:y)$ where $h(x:y)$ is a positive {\em conformal factor} 
function~\cite{tJ-2015} and $D$ a base divergence.
Conformal divergences such as the total Jensen divergences~\cite{tJ-2015} or the total Bregman divergences~\cite{tB-2012} proved useful in practice to regularize the base divergence and to guarantee invariance by rotation of the coordinate system.
\end{remark}

When considering quasi-arithmetic means  for $M=M_f$ and $N=M_g$ for two continuous and increasing functions $f$ and $g$ on an interval domain $I=[a,b]$, a necessary and sufficient condition~\cite{Cargo-1965}
 for $M_f\leq M_g$ is that $g\circ f^{-1}$ is convex on  interval $[f(a),f(b)]$.
Function $g$ is then said convex with respect to $f$.
Thus two quasi-arithmetic means $M_f$ and $M_g$ are comparable when either  $g\circ f^{-1}$ is convex ($M_f\leq M_g$) or 
$f\circ g^{-1}$ is convex ($M_f\geq M_g$).

{\em Relative convexity} studies the concept of convexity of a function $g$ with respect to
another function $f$: It is denoted by $f \triangleleft g$, with the notation $\triangleleft$ borrowed from~\cite{ConvexFunction-2006}. 
 A general characterization of the relative convexity $f\triangleleft g$ is as follows:

\begin{equation}
\forall x,y,z\in\calX, f(x)\leq f(y)\leq f(z) \Rightarrow \left| \begin{array}{ccc}
1 & f(x) & g(x)\cr
1 & f(y) & g(y)\cr
1 & f(z) & g(z)\cr
\end{array} \right|\geq 0,
\end{equation}
for $f$ a non-constant function.

When the domain $\calX=I$ is an interval and $f$ is a strictly increasing and continuous function, then 
$M_f\leq M_g$ iff $f\triangleleft g$ ($g\circ f^{-1}$ is convex).

For example, $F$ is multiplicatively convex (type $(G,G)$-convexity) iff:
\begin{equation}
\forall x,y,z\in\calX, x \leq y \leq z,\quad
\Rightarrow \left| \begin{array}{ccc}
1 & \log x & f(x)\cr
1 & \log y & f(y)\cr
1 & \log z & f(z)\cr
\end{array} \right|\geq 0,
\end{equation}

Relative convexity is a sub-area of comparative
convexity.
For example, we have the following  correspondences of comparative convexity classes of functions:
\begin{itemize}
	\item $f\in\calC$  iff $\idf \triangleleft f$,
	\item $f\in\calC_{A,G}$ iff $\idf \triangleleft \log f$,
	\item $f\in\calC_{G,A}$ iff $\log \triangleleft  f$,
	\item $f\in\calC_{G,G}$ iff $\log \triangleleft \log f$.
\end{itemize}

The criterion of relative convexity can be used to recover the celebrated Arithmetic Mean-Geometric Mean-Harmonic Mean (AM-GM-HM inequality):
When $M=M_{\log}$ is chosen as the geometric mean and $N=M_{\idf}$ is taken as the arithmetic mean, 
we check that we have $\log \triangleleft \idf$ (and $\idf\circ \exp=\exp$, $M_{\log} \leq M_{\idf}$), and we recover the ordinary skewed Bhattacharyya distance. 

An interesting subfamily of Bhattacharyya distance is obtained for $N=A=M_\idf$. In that case, we have:

\begin{equation}\label{eq:GenBhatOne2}
\boxed{
\Bhat_\alpha^{M,N}(p(x):q(x)) = -\log  \int_\calX M(p(x),q(x);1-\alpha,\alpha) \dnu(x),
}
\end{equation}
for $M_f\leq M_\idf$ with $f^{-1}$ ordinary convex.

We compare the $\delta$-th H\"older power mean with the $\delta'$-th H\"older power mean on $\bbR_{++}=(0,\infty)$ as follows:
$P_\delta\leq P_{\delta'}$ when $\delta\leq\delta'$ and $P_\delta\geq P_{\delta'}$ when $\delta\geq\delta'$.

Notice that in~\cite{GenBhat-2014}, the generalized Bhattacharyya coefficients $c_\alpha^f(p(x):q(x))=\int_\calX M_f(p(x),q(x);1-\alpha,\alpha) \dnu(x)$ were introduced to upper bound the Bayes' probability of error.
Here, we further extend this generalization by considering comparable means, and we rewrite the comparative-mean Bhattacharyya distance as:
\begin{equation}\label{eq:GenBhat2}
\boxed{\Bhat_\alpha^{M,N}(p(x):q(x)) = -\log \frac{ c_\alpha^M(p(x):q(x)) }{c_\alpha^N(p(x):q(x))}}
\end{equation}
where $c_\alpha^M(p(x):q(x))=\int_\calX M(p(x),q(x),\alpha,1-\alpha) \dnu(x)$ is a generalized Bhattacharyya affinity coefficient.
Notice that $c_\alpha^A(p(x):q(x))=1$ when choosing the arithmetic mean.


Those generalized  Bhattacharyya distances are handy for getting closed-form formulas depending on the structure of the probability densities:
For example, consider the harmonic-arithmetic comparative-mean Bhattacharyya distance $\Bhat_\alpha^{H,A}(p(x):q(x))$ between two Cauchy distributions
$p(x)=p(x;s_1)$ and $q(x)=p(x;s_2)$ with $p(x;s)=\frac{s}{\pi(x^2+s^2)}$ for a scale parameter $s>0$.

It was shown in~\cite{GenBhat-2014} that:
\begin{equation} 
c_\alpha^H(p(x;s_1):p(x;s_2))= \frac{s_1s_2}{((1-\alpha)s_1+\alpha s_2 )s_\alpha}.
\end{equation}

Therefore it comes that:
\begin{equation} 
\Bhat_\alpha^{H,A}(p(x;s_1):p(x;s_2)) = -\log \frac{s_1s_2}{((1-\alpha)s_1+\alpha s_2 )s_\alpha}.
\end{equation}

The original $(G,A)$-Bhattacharyya distance does not allow to get a simple closed-form expression when dealing with Cauchy distributions.
It is thus a mathematical trick to tailor the generalized Bhattacharyya distance  to the structure of the family of distributions in practice to get efficient algorithms.

Note that the Cauchy family of distributions do not form an exponential family, but can be interpreted as a deformed exponential family~\cite{Naudts-2014} by defining corresponding deformed logarithm and exponential functions.

Other examples of generalized Bhattacharyya coefficients with closed-form expressions are reported in~\cite{GenBhat-2014} using the power means for the Pearson type VII and multivariate $t$-distributions.

Notice that in the discrete case, we get a closed-form expression since integrals transforms into finite sums:

\begin{equation}
\Bhat_\alpha^{M,N}(p(x):q(x)) = -\log \frac{\sum_{i=1}^d M(p_i,q_i;1-\alpha,\alpha)}{\sum_{i=1}^d  N(p_i,q_i;1-\alpha,\alpha)}.
\end{equation}

There are many statistical distances available that prove useful depending on application context~\cite{Basseville-2013}.
Comparative means allow one to define yet another one as follows:

\begin{remark}
Note that comparable  means~\cite{Cargo-1965} satisfying  $M_f\leq M_g$ allows to define a symmetric distance gap:
\begin{equation}
D_{f,g}(p(x),q(x)) = \int \left(M_g(p(x),q(x))-M_f(p(x),q(x))\right)\dnu(x)\geq 0
\end{equation}
A necessary and sufficient condition for $f,g:I=[a,b]\rightarrow\bbR$ is to have $g\circ f^{-1}$ convex on interval $[f(a),f(b)]$, see~\cite{Cargo-1965}.
\end{remark}

\section{Conclusion and discussion}\label{sec:concl}

We defined generalized Jensen divergences (Definition~\ref{def:jccd}) and generalized Bregman divergences (Definition~\ref{def:bccd}) using the framework of comparative convexity based on abstract means.
In particular, we reported a closed-form formula for the generalized  Bregman divergences (Theorem~\ref{theo:qabd}) when considering quasi-arithmetic means.
We proved that those generalized quasi-arithmetic Bregman divergences are proper divergences that can be interpreted as conformal ordinary Bregman divergences on  an embedded representation of input space.
Those generalized Bregman divergences can be fruitfully used in machine learning: Not only can we learn~\cite{LearningDivergence-2015} the separable convex generators $F_i$'s component-wise, but we can also learn the increasing functions $\rho_i$ and $\tau_i$ that induces the quasi-arithmetic means $M_{\rho_i}$ and $M_{\tau_i}$.
Finally, we introduced a generalization of the Bhattacharyya statistical distance (Definition~\ref{def:GB}) and of the  Bhattacharyya coefficient for comparable means, and show that depending on the structure of the distributions we may obtain handy closed-form expressions or not.
In particular, the generalized Bhattacharyya distances yield homogeneous divergences when homogeneous comparative means are used.
Since minimizing skewed Bhattacharyya distances allows one to bound the probability of error, we may define similarly generalized Chernoff information~\cite{Chernoff-2013}, etc.

This work emphasizes that the theory of means are at the very heart of distances.
In general, a family of means can be investigated by studying comparison and equality, homogeneity, and exact characterization by functional equations or inequalities. There are many means~\cite{Bullen-2013} ({\it eg.}, Heinz means~\cite{Heinz-2006}, Gini means, Lehmer means, counter-harmonic means, Whitely means, Muirhead means)  to consider to build and study novel family of distances and statistical distances. To derive generalized Bregman divergences, we need to study the asymptotic expansion of barycentric means. Generalization of means to weighted means is an interesting topic in itself~\cite{Witkowski-2006}.
We may also consider asymmetric weighted means~\cite{Qi-2000} to define corresponding Bregman divergences. 
Properties of a family of means may also be considered.
For example, the power means form a {\em scale}~\cite{Pasteczka-2015}: 
That means that there exists a bijection between $\delta\in\bbR$ and $P_\delta(u,v)$ for $u\not =v$.
Informally speaking, the family of power means allows one to interpolate between the minimum and the maximum.
Although the power means are the only homogeneous quasi-arithmetic means that form a scale, there exist other families of quasi-arithmetic means
that form a scale~\cite{Pasteczka-2015}.  Means can also be defined for various types of data like matrices~\cite{Petz-2005}.

As a final remark, let us notice that we have defined several divergences using (extrinsic) means. 
But divergences $D(\cdot:\cdot)$ induce a geometry where we can also be used to define (intrinsic) means $m$ (commonly called centers) by minimizing the loss function $\frac{1}{n} \sum_{i=1}^n D(p_i:m)$.

\section*{Acknowledgments}
The authors would like to thank Gautier Marti and Ga\"etan Hadjeres for reading a preliminary draft.


\appendix

\section{Axiomatization of the family of quasi-arithmetic means\label{sec:aqam}}

Consider the following axiomatization of a  mean:

\begin{description}

\item[QAM1 (Reflexivity).] The mean of two identical elements is that element: $M(x,\ldots,x)=x$,

\item[QAM2 (Symmetry).] The mean is a symmetric function: $M(\sigma(x_1),\ldots,\sigma(x_n))=M(x_1,\ldots,x_n)$ for any permutation $\sigma$,

\item[QAM3 (Continuity and monotonicity).] The mean function $M(\cdot)$ is continuous and increasing in each variable, 

\item[QAM4 (Associativity).] The mean $m=M(x_1,\ldots x_n)$ is invariant when some elements are replaced by the partial mean value $m_{i-1}=M(x_1,\ldots,x_{i-1})$:  
$M(\underbrace{m_{i-1},\ldots,m_{i-1}}_{i-1},x_i,\ldots x_n)=m$ for all $i\in \{2,\ldots, n\}$. 

\end{description}

Kolmogorov~\cite{Kolmogorov-1930} proved in 1930 that the quasi-arithmetic mean $m$ writes for a continuous monotone function $f$ as:

\begin{equation}\label{eq:qamk}
m = M(x_1,\ldots,x_n) = f^{-1}\left(  \frac{1}{n} \sum_{i=1}^n f(x_i) \right).
\end{equation}

This quantity is called a quasi-arithmetic mean because we have $f(m)=\sum_{i=1}^n f(x_i)$: 
That is, the $f$-representation of the mean $m$ is the arithmetic mean of the $f$-representations of the elements.

There is a growing interest in generalizing quasi-arithmetic means.
In~\cite{GenQAM-2010}, the mean defined for two strictly continuous and monotonic functions $f$ and $g$ is defined by:
$$
M_{f,g}(p,q)=(f+g)^{-1}(f(p)+f(q)).
$$
The homogeneous means of that family coincide with the power (quasi-arithmetic) means~\cite{GenQAM-2010}.
This family of bivariate means can be extended to multivariate means: $M_{f_1,\ldots, f_n}(p,q)=(\sum_{i=1}^n f_i)^{-1}((\sum_{i=1}^n f_i(x_i))$, with functions $f_i$'s strictly continuous and monotonic of the same type.

Similarly, a {\em quasi-arithmetic expected value}~\cite{mean-2016} can be defined for a random variable $X\sim p(x)$ as follows:

\begin{eqnarray}
E_f[X] &=& f^{-1}\left( E_p[f(X)] \right),\\
&=& f^{-1}\left( \int_\calX p(x)f(x) \dnu(x) \right).
\end{eqnarray}

Notice that this definition of a quasi-arithmetic expected value coincides with the mean of a finite set of values of Eq.~\ref{eq:qamk} by taking the corresponding discrete distribution. Furthermore, the quasi-arithmetic expected value can be extended to positive and integrable densities $p(x)$ as follows: 
\begin{equation}
E_f[X] = f^{-1}\left( \frac{\int_\calX p(x)f(x)}{\int_\calX p(x)\dnu(x)} \dnu(x) \right).
\end{equation}

For example, $E_x[X]=E[X]$ is the expected value, $E^G[X]=E_{\log(x)}[X]=\exp(E[\log(X)])$ is the {\em geometric expected value}~\cite{Galton-1879}, 
and $E^H[X]=E_{\frac{1}{x}}[X]=\frac{1}{E[\frac{1}{X}]}$ is the {\em harmonic expected value}.

See also~\cite{Gibilisco-2017} for generalized Jensen inequalities of expectations of positive random variables with means.

\section{Lehmer Bregman divergences}

To illustrate the construction of generalized Bregman divergences beyond the quasi-arithmetic Bregman divergences, let us choose another family of parametric means: The Lehmer means.
Although the Lehmer means $L_\delta$ are not always regular, let us consider the subclass of regular Lehmer means.
The {\em weighted Lehmer mean}  $L_\delta$ for two positive reals $x$ and $y$ is defined by: 
\begin{equation}
L_\delta(x,y;\alpha)=\frac{(1-\alpha)x^{1+\delta}+\alpha y^{1+\delta}}{(1-\alpha)x^{\delta}+\alpha y^{\delta}}.
\end{equation}

Using the first-order Taylor expansion of $\frac{1}{1+x}=1-x+o(x^2)$, 
we get the following approximation of the Lehmer weighted mean when $\alpha\rightarrow 0$:
\begin{eqnarray}
L_{\delta,\alpha}(p,q) = L_\delta(p,q;\alpha) &=& \frac{p^{1+\delta}+\alpha(q^{1+\delta}-p^{1+\delta})}{p^{\delta}+\alpha(q^{\delta}-p^{\delta})},\\
&\simeq& \left(p+\alpha \frac{q^{1+\delta}-p^{1+\delta}}{p^\delta}\right) \left(1-\alpha \frac{q^\delta-p^\delta}{p^\delta}\right),\\
&\simeq& p+\alpha \frac{q^{1+\delta}-q^\delta-p^{1+\delta}+p^\delta}{p^\delta}.
\end{eqnarray}

Note that when $\delta=0$, the Lehmer mean $L_0$ equals the arithmetic mean $A$, and we find that $L_\delta\simeq p+\alpha(q-p)$, as expected.
When $\delta=-1$, the Lehmer weighted mean $L_{-1}$ equals the harmonic weighted mean $H(p,q;\alpha)=\frac{pq}{(1-\alpha)q+\alpha p}$, 
and we find that  $L_{-1}(p,q;\alpha)=H(p,q;\alpha)\simeq p+\alpha(1-\frac{p}{q})$.
The Taylor expansion for the  quasi-arithmetic mean (with generator $\tau(x)=\frac{1}{x}$) yielded $H(p,q;\alpha')\simeq p+\alpha'(p-\frac{p^2}{q})=p+(\alpha' p)(1-\frac{p}{q})$.
So by choosing $\alpha=\alpha' p$ (and $\alpha\rightarrow 0$ when  $\alpha'\rightarrow 0$), the two Taylor expansions obtained by the Lehmer and the quasi-arithmetic expressions match.

Let us define the Lehmer Bregman divergence with respect to two regular Lehmer means $L_\delta$ and $L_{\delta'}$ by:
\begin{equation}
B_F^{L(\delta,\delta')}(p:q) = \lim_{\alpha\rightarrow 0} \frac{1}{\alpha}\left(L_{\delta',\alpha}(F(p),F(q))  - F(L_{\delta,\alpha}(p,q)) \right).
\end{equation}

When $\alpha\rightarrow 0$, we have
\begin{equation}
F(L_{\delta,\alpha}(p,q)) \simeq F(p)+ \alpha \frac{q^{1+\delta}-q^\delta-p^{1+\delta}+p^\delta}{p^\delta} F'(p)
\end{equation}
and
\begin{equation}
L_{\delta',\alpha}(F(p),F(q)) \simeq F(p)+\alpha \frac{F(q)^{1+\delta'}-F(q)^{\delta'}-F(p)^{1+\delta'}+F(p)^\delta}{F(p)^{\delta'}}
\end{equation}

It follows that: 
\begin{equation}
B_F^{L(\delta,\delta')}(p:q)= 
 \frac{F(q)^{1+\delta'}-F(q)^{\delta'}-F(p)^{1+\delta'}+F(p)^{\delta'}}{F(p)^{\delta'}} - \frac{q^{1+\delta}-q^\delta-p^{1+\delta}+p^\delta}{p^\delta} F'(p)
\end{equation}

Let $\chi_\delta(p:q)=\frac{q^{1+\delta}-q^\delta-p^{1+\delta}+p^\delta}{p^\delta}$.
Then we get the compact expression of Lehmer Bregman divergences:

\begin{equation}
B_F^{L(\delta,\delta')}(p:q)= \chi_{\delta'}(F(p):F(q)) - \chi_\delta(p:q)F'(p), 
\end{equation}
where $F$ is a $(L_\delta,L_{\delta'})$-convex function.

\end{document}